\documentclass[12pt]{amsart}
\usepackage{amsmath,amssymb,amsmath,amsthm,amsopn}
\usepackage{units}
\usepackage{xfrac}
\usepackage{fixltx2e}
\usepackage[utf8]{inputenc}
\usepackage{lmodern}
\usepackage[T1]{fontenc}
\usepackage{graphicx}
\usepackage{enumitem}
\usepackage{tikz}
\usepackage{nameref}
\usetikzlibrary{patterns}

\usepackage[letterpaper]{geometry}
\geometry{tmargin=2.5cm,bmargin=2.5cm,lmargin=2.5cm,rmargin=2.5cm}
\usepackage[unicode=true,
bookmarks=true,bookmarksnumbered=false,bookmarksopen=false,
 breaklinks=false,pdfborder={0 0 1},backref=false,colorlinks=false]
 {hyperref}
\hypersetup{
 pdfauthor={Jeffrey Schenker}}

\numberwithin{equation}{section}
\numberwithin{figure}{section}
\newtheorem{thm}{\protect\theoremname}
\newtheorem{lem}{Lemma}

\newtheorem{cor}{Corollary}

\theoremstyle{definition}
\newtheorem{ass}{Assumption}
  \theoremstyle{remark}
\newtheorem*{rem*}{\protect\remarkname}
\newtheorem*{rems*}{Remarks}
 \providecommand{\remarkname}{Remark}
\providecommand{\theoremname}{Theorem}

\newcommand{\bb}[1]{\mathbb{#1}}
\newcommand{\mc}[1]{\mathcal{#1}}
\newcommand{\ipc}[2]{\left \langle #1 , \ #2 \right \rangle }
\newcommand{\Ev}[1]{\E \left( #1 \right)}  
\newcommand{\norm}[1]{\left\Vert#1\right\Vert}
\newcommand{\abs}[1]{\left\vert#1\right\vert}

\newcommand{\setb}[2]{\left \{ #1 \ \middle | \ #2 \right \} }

\renewcommand{\vec}[1]{\mathbf{#1}}

\def\e{\mathrm e}

\def\im{\mathrm i}

\def\1{{\mathsf 1}}
\def\di{\mathrm d}
\def\Pr{\operatorname{Prob}} 

\def\dist{\operatorname{dist}}   
\def\tr{\operatorname{tr}}    

\def\Re{\operatorname{Re}}
\def\Im{\operatorname{Im}}
\def\Z{\mathbb Z}

\def\R{\mathbb R}
\def\C{\mathbb C}
\def\E{\mathbb E}

\def\Hi{\mathcal H}


\begin{document}
\title{Diffusive scaling for all moments of the Markov Anderson model}

\dedicatory{Dedicated to Professor Leonid Pastur on the occasion of his 75\textsuperscript{th} birthday.}

\author[Musselman]{Clark Musselman}
\address{Clark Musselman \\ Department of Mathematics \\ Bard College at Simon's Rock }
\email{bmusselman@simons-rock.edu}
\author[Schenker]{Jeffrey Schenker}
\address{Jeffrey Schenker \\ Department of Mathematics \\ Michigan State University }
\curraddr{School of Mathematics \\ Institute for Advanced Study}
\email{jeffrey@math.msu.edu}
\thanks{Some of the results presented here appeared in the Ph.D. thesis of the first author \cite{bcmphd}.}
\thanks{Both authors were supported in part by NSF Award DMS-08446325.  The second author was also supported also by The Fund For Math.}
\date{}
\maketitle

\begin{abstract} We consider a tight-binding Schr\"odinger equation with time dependent diagonal noise, given as a function of a Markov process.  This model was considered previously by Kang and Schenker \cite{Kang2009b}, who proved that the wave propagates diffusively.  We revisit the proof of diffusion so as to obtain a uniform bound on exponential moments of the wave amplitude and a central limit theorem that implies, in particular, diffusive scaling for all position moments of the mean wave amplitude.  \end{abstract}

\section{Introduction}
It is generally expected that the amplitude of a  wave propagating in a weakly disordered background evolves \emph{diffusively}, i.e., that the wave amplitude obeys an effective parabolic equation over sufficiently long space and time scales, at least in dimension $d \ge 3$.  This belief is suggested by picturing wave propagation as a multiple scattering process. Scattering off the disordered background results in random phases and the build up of these phases over time  eventually leads to decoherence.  Decoherent propagation of the wave may be understood as a classical superposition of reflections from random obstacles.  As long as recurrence does not dominate, the central limit theorem suggests a diffusive evolution for the amplitude in the long run.   So far, it has not been possible to turn this heuristic argument into mathematical analysis without restricting the time scale over which the wave evolution is followed, as in \cite{Erdos2007b,Erdos2008d}. 
   
One major obstacle to proving diffusion is a lack of control over recurrence:  the wave packet may return often to regions visited previously, denying us the independence needed to carry out the central limit argument. Thus, one may expect to produce a model in which diffusion occurs by eliminating or reducing recurrence. This is the basis for the \emph{Markov-Anderson model} considered in \cite{Pillet1985,Tcheremchantsev1997,Tcheremchantsev1998,Kang2009b}.  In particular, 
Kang and Schenker \cite{Kang2009b} considered the equation
\begin{equation}\label{eq:MSE}\im \partial_{t} \psi_{t}(x) \ = \ K \psi_{t}(x) + v_x(\omega_t) \psi_{t}(x)
\end{equation}
on $\ell^{2}(\Z^{d})$ where 
\begin{enumerate}[topsep=1pt]
\item $K$ is a self-adjoint translation invariant hopping operator,
\begin{equation}\label{eq:h} K \psi(x)=\sum_{y} h(x-y) \psi(y),
\end{equation}
with $h$ such that $\sum_{y} y^2 \abs{h(y)} <\infty$ and satisfying a non-degeneracy condition explicated in \cite{Kang2009b},
\item $(\omega_t)_{t\ge 0}$ is a Markov process with unique invariant probability measure $\mu$ on a space $\Omega$, and
\item $v_x:\Omega \rightarrow \R$ is a bounded ergodic random field.
\end{enumerate}
(These conditions are explained in more detail below.)  Roughly speaking, the  main  result of \cite{Kang2009b} is as follows.  If the Markov process $(\omega_t)_{t\ge 0}$  has a ``spectral gap,'' which is  to say
$$ \abs{\int_{\Omega}\Ev{f(\omega_{t})\middle | \omega_0 =\omega }d\mu(\omega) - \int_{\Omega}f(\omega) \di \mu(\omega)} \le \e^{- t/T}$$
for some $T > 0$, then any solution to eq.~ \eqref{eq:MSE} with $\norm{\psi_0}_{\ell^2} =1$ satisfies 
\begin{equation}\label{eq:diffusion}\sum_{x} e^{i \frac{1}{\sqrt{\tau}}\vec{k}\cdot x} \Ev{ \abs{\psi_{\tau t}(x) }^2} \ \xrightarrow[]{\tau \rightarrow \infty} \ \e^{-\frac{1}{2} t\vec{k} \cdot \vec{D} \vec{k} }
\end{equation}
for each $\vec{k}\in \R^d$ and $t>0$.   Here, $\vec{D}$ is a positive definite matrix, called the \emph{diffusion matrix}, that depends on $T$, $v_x$ and the process $(\omega_t)_{t\ge 0}$.  Furthermore, it was shown in \cite{Kang2009b} that if $\sum_x |x|^2 \abs{\psi_0(x)}^2 < \infty$ then 
\begin{equation}\label{eq:diffusivescaling}
\frac{1}{t} \sum_{x} |x|^2 \Ev{\abs{\psi_{t}(x)}^2} \xrightarrow[]{t\rightarrow \infty} \tr \vec{D}.
\end{equation}
These results hold in any dimension $d$, essentially because the Markovian time dependence of $v_x(\omega_t)$ eliminates recurrence effects that might otherwise dominate in low dimension.

We refer to eq. \eqref{eq:diffusion} as \emph{quantum diffusion}.  It indicates how the dynamics of $\Ev{|\psi_t(x)|^2}$ are homogenized over large space and time scales: over large spatial scales of order $\sqrt{\tau}$ and long time scales of order $\tau$, the mean square amplitude $\Ev{\abs{\psi_{ t}(x)}^2}$ is effectively described by the fundamental solution to the diffusion equation
$$\partial_t u(x,t) = \frac{1}{2} \sum_{i,j} D_{i,j} \partial_{x_i}\partial_{x_j} u(x,t).$$ 
By contrast, eq.~ \eqref{eq:diffusivescaling} states only that the mean square position of the wave packet scales linearly as $t\rightarrow \infty$.  For this reason we refer to eq.~\eqref{eq:diffusivescaling} as \emph{diffusive scaling}.  Formally, this seems to follow from eq.~\eqref{eq:diffusion} by taking two derivatives with respect to $\vec{k}$.  However, this formal differentiation is not clearly justified and in \cite{Kang2009b} the two statements were proved separately. (By L\'{e}vy's continuity theorem \cite[Theorem 3.3.6]{Durrett}, eq.~\eqref{eq:diffusivescaling} does follow from eq.~\eqref{eq:diffusion} and a uniform bound on the left hand side.  However, proving the convergence directly seems to be as simple in this case as obtaining the needed upper bound.)

In the present work, we review the proof of diffusion given in \cite{Kang2009b} under somewhat stronger conditions on the hopping operator $K$, so as to obtain diffusive scaling for all position moments,
\begin{equation}\label{eq:diffusivescalingconjecture}\sum_{x} |x|^p \Ev{\abs{\psi_t(x)}^2} \ \sim \ t^{\nicefrac{p}{2}}
\end{equation}
for each $p>0$.   Scaling as in eq.~\eqref{eq:diffusivescalingconjecture} does not follow directly from eq.~\eqref{eq:diffusivescaling}; in principle the evolution could be ``multi-fractal,'' requiring different scaling exponents for different moments.  However, this does not seem consistent with eq.~\eqref{eq:diffusion} since for even $p$ eq.~\eqref{eq:diffusivescalingconjecture} is suggested by $p$ formal differentiations with respect to $\vec{k}$.   Below, we  show that eq.~\eqref{eq:diffusion} may be extended to complex values of $\vec{k}$ so that the differentiations required to obtain scaling for the moments are justified by complex analysis. (A similar argument was used recently in the context of a discrete time quantum walks \cite{Hamza2012}. An alternative proof of the scaling eq.~\eqref{eq:diffusivescalingconjecture} could proceed by L\'{e}vy's continuity theorem and the uniform bound on exponential moments that is implied by our analytic continuation argument.)

\section{Formulation of the results}
\subsection{Simple models}  The general conditions we require of the Markov process are somewhat complicated to state.  For this reason, it may be useful first to give a few examples of models for which our results stated below are valid.  An impatient reader may wish to read these examples and then skip to \S\ref{sec:results} where our main results are stated.

In formulating these simple examples, there is no loss in taking the hopping operator to be simply the discrete Laplacian
$$K \psi(x) = \sum_{|y-x|=1} \psi(y).$$
Below, we will allow for a more general hopping operator of the form eq.\ \eqref{eq:h}.  For the potential and the Markov process we describe three examples, each of which satisfies all of the hypotheses we lay out in the next section. \begin{enumerate}[leftmargin=.275in]
\item  Let the potential $v_x(\omega_t)$ be $\pm 1$ at each site $x$ with flips from $1$ to $-1$ occurring at rate $\nicefrac{(1-p)}{T}$ and from $-1$ to $1$ at rate $\nicefrac{p}{T}$, independently for each site $x$.\\
More formally, this amounts to taking $\Omega = \{ -1,+1\}^{\Z^d}$ with  $v_x(\omega)=\omega(x)$ the coordinate maps.  The probability measure $\mu$ is the product of identically Bernoulli measures, $\mu  =  \prod_{x} \nu_x$
where $\nu_x =\nu$ is the fixed measure on $\{-1,1\}$ with
$$\nu( \{1\})  = p \quad \text{and} \quad \nu(\{-1\} ) = (1-p).$$
Finally, to each site $x\in \Z^d$ we associate a Poisson process $0<t_1(x)<t_2(x)<\cdots$ of intensity $\nicefrac{1}{T}$ such that processes associated to distinct sites are independent.  The value of $\omega_t(x)$ is constant except at the times $t_j(x)$, $j=1,\ldots$ but at time $t=t_j(x)$ it is ``resampled'' according to the measure $\nu$.  That is,
\begin{multline*}\bb{P}\left (\omega_{t+dt}(x) = 1 \middle | \omega_t(x) =1 \right )= 1 - \frac{1-p}{T} dt , \quad \bb{P}\left (\omega_{t+dt}(x) = -1 \middle | \omega_t(x) =1 \right ) = \frac{1-p}{T} dt,  \\
\bb{P}\left (\omega_{t+dt}(x) = 1 \middle | \omega_t(x) =-1 \right )= \frac{p}{T} dt  \quad \text{and} \quad \bb{P}\left (\omega_{t+dt}(x) = -1 \middle | \omega_t(x) =-1 \right ) = 1-\frac{p}{T} dt
\end{multline*}
up to terms of order $(dt)^2$.

 \item Let $\Omega$, $\mu$ and $v_x$ be as in the previous example.  However, now take only one Poisson process for all sites $0 <t_1<\cdots$ and at time $t=t_1$ ``resample'' the entire random field $\{\omega(x)\}_{x\in \Z^d}$ according to the measure $\mu$.  That is,
$$
 \bb{P}\left (\omega_{t+dt} \in A \middle | \omega_t \right ) \ = \ \begin{cases}
 1 - \frac{\mu(A^c)}{T} dt & \text{ if } \omega_t \in A \\
 \frac{\mu(A)}{T} dt & \text{ if } \omega_t \not \in A
 \end{cases}
 $$
up to terms of order $(dt)^2$

\item Let $\Omega = S_1^{\Z^d}$ where $S_1$ is the circle, parameterized as $[0,2\pi]$ with $0$ and $2\pi$ identified.  Let $v_x(\omega)= \cos(\omega(x))$ and let $\omega_t(x)=\omega_0(x) + \frac{1}{\sqrt{T}} b_t(x)$ where $b_t(x)$ are independent Brownian motions, one for each site $x\in \Z^d$.  Then the measure $\mu =\prod_x \nu_x$ with $\nu_x=\nu$ normalized Lebesgue measure is invariant.
\end{enumerate}

\subsection{Assumptions} We start with the hopping operator $K$.  We require three assumptions: self-adjointness (to guarantee unitarity of the evolution), an exponential bound (to allow for an analytic continuation argument) and non-degeneracy (to assure that diffusion is non-zero in all directions).
\begin{ass}\label{ass:K} The hopping kernel $h$ (see eq.\ \eqref{eq:h}) 
satisfies 
\begin{enumerate}
\item \emph{Self-adjointness}: $h(\zeta) = h(-\zeta)^*$
\item \emph{Exponential boundedness}:
\begin{equation}
\label{eq:hassume} 
\sum_y \e^{m|y|} h(y) \ < \ \infty
\end{equation}
for some $m>0$.
\item \emph{Non-degeneracy}: For every non-zero $\vec{k}\in \R^d$ there is $y\in \Z^d$ with $\vec{k}\cdot y \neq 0$ and $h(y)\neq 0$. 
\end{enumerate}
\end{ass}

The shifting random potential is composed of two ingredients: a family of random variables $v_x$ over a probability space $(\Omega,\mu)$ and a Markov process $(\omega_t)_{t\ge 0}$ defined on the same space. Regarding this probability space we make the following
\begin{ass} The space $\Omega$ is a topological space and $\mu$ is a Borel measure. Furthermore there are $\mu$-measure preserving maps $\sigma_x:\Omega \rightarrow \Omega$, $x\in \Z^d$, such that $\sigma_0$ is the identity map and $\sigma_x \circ \sigma_y = \sigma_{x+y}$ for each $x,y\in \Z^d$.
\end{ass}

The prototypical example of $\Omega$ is, as in each of the examples from the prior section, the model space for an infinite collection of independent identically distributed random variables.  That is, $X^{\Z^d}$ with $X$ a compact set endowed with a probability measure $\nu$,  the measure $\mu=\prod_x \nu$ and the maps $\sigma_x$ are the coordinate shifts.  However, independence and product measures are not required for our analysis \textemdash \ in particular we can allow for correlated measures (like Gibbs measures) on $X^{\Z^d}$ provided eq.~\eqref{eq:vnondeg} below holds.  We continue to refer to the maps $\sigma_x$ as \emph{shifts} below, even if $\Omega$ is not a product space.  

Let us now turn to the Markov process $(\omega_t)_{t\ge 0}$.
\begin{ass}
For each $a\in \Omega$, there is a probability measure  $\mathbb{P}_a$  on the $\sigma$-algebra generated by Borel cylinder subsets of the path space $\mc{P}(\Omega)=\Omega^{[0,\infty)}$.  Furthermore the collection of these measures has the following properties.
\begin{enumerate}
\item \emph{Right continuity of paths}: For each $a\in \Omega$, with $\mathbb{P}_a$ probability one, every path $t\mapsto \omega_t$ is right continuous and has initial value $\omega_0=a$.
\item \emph{Shift invariance in distribution}: For each $a\in \Omega$ and $x\in \Z^d$, $\bb{P}_{\sigma_x a}  =  \bb{P}_a \circ \mc{S}_x^{-1}$,
where $\mc{S}_x(\omega_t) = \sigma_x\omega_t$ is the shift $\sigma_x$ lifted to path space $\mc{P}(\Omega)$
\item \emph{Markov property}:  For each $s>0$, let $\mc{T}_s:\mc{P}(\Omega) \rightarrow \mc{P}(\Omega)$ be the time-shift operator $[\mc{T}_s\omega]_t = \omega_{t+s}$. Then, for any measurable $\mathcal{A} \subset\mc{P}(\Omega)$ and any $a\in \Omega$, $\bb{P}_a(\mc{T}_t^{-1}(\mathcal{A}))  =   \bb{E}_{a} \left ( \bb{P}_{\omega_t}(\mc{A}) \right )  $,
where $\bb{E}_{a}$ denotes the average  with respect to $\bb{P}_a$.
\item \emph{Invariance of $\mu$}: For any Borel measurable $A\subset \Omega$ and each $t>0$,
$$\int_\Omega \bb{P}_a (\omega_t\in A) \di \mu(a) \ = \ \mu(A).$$
\end{enumerate}
We will use $\Ev{\cdot}$ to denote the joint average with respect to $\bb{E}_a(\cdot)$ and $\di \mu (a)$:
$$\Ev{\cdot} \ = \ \int_\Omega \bb{E}_a(\cdot) \di \mu(a).$$ 
So invariance of the measure $\mu$ can be put succinctly as $\Ev{f(\omega_t)} = \Ev{f(\omega_0)}$ for any integrable $f:\Omega \rightarrow \R$.
\end{ass}

It is standard that the Markov property implies that
$$ S_tf(a) \ := \ \bb{E}_a\left ( f(\omega_t) \right )$$
defines  a contraction semi-group on each $L^p(\di \mu)$ space for $1\le p \le \infty$. Right continuity of paths in turn implies that this semi-group is strongly continuous.   The adjoint semi-group $S_t^\dagger$, defined by
\begin{equation}\label{eq:Stdag}
\int_\Omega g(a) S_t^\dagger f(a) \ = \ \Ev{g(\omega_t)f(\omega_0)},
\end{equation}
is also a strongly continuous contraction semi-group.\footnote{Formally, eq.~\eqref{eq:Stdag} can be read as  $S_t^\dagger f(a) = \bb{E} \left ( f(\omega_0) \middle | \omega_t=a \right ).$
This is an inkling of the more general statement, which we need not formulate precisely here, that $S_t^\dagger$ is the contraction semi-group associated to the Markov process $(\omega_t)_{t\ge 0}$ run backwards in time.}  By the Lumer-Phillips theorem,\footnote{The Lumer-Phillips Theorem \cite[Theorem 3.1]{Lumer1961} characterizes the generators of strongly continuous contraction semi-groups on a Banach space.  The Hilbert space case which we use here is in fact an earlier result due to Phillips \cite[Theorem 1.1.3]{Phillips1959}.} 
the generator $B$ defined by
\begin{equation}\label{eq:Bdefn}
Bf \ = \ \lim_{t \downarrow 0} \frac{1}{t} \left (f - S_t^\dagger f \right ),
\end{equation}
on the domain $\mathcal{D}(B)$ such that the limit on the right hand side exists in the $L^2$-norm, is a \emph{maximally accretive operator}.\footnote{
We use the word \emph{generator} to indicate that formally $S_t^\dagger = \e^{-t B}$ \textemdash \ note the negative sign in the exponent. 
A closed densely defined operator $A$ on a Hilbert space is \emph{accretive} if $\Re \ipc{f}{Af} \ge 0$ for all $f\in \mathcal{D}(A)$.  It is \emph{maximally accretive} if it has no proper closed accretive extension; equivalently $A^\dagger$ is also accretive. See \cite[\S V.3.10]{Kato1995} and \cite{Phillips1959}.}
The generator of $S_t$ is the adjoint $B^\dagger$ of $B$, which is also maximally accretive.

The invariance of $\mu$ implies that $S_t^\dagger 1 = S_t 1 =1$, where $1$ denotes the function identically one on $\Omega$. Thus $1\in \mathcal{D}(B)\cap \mathcal{D}(B^\dagger)$ and
$$ B1 = B^\dagger 1 = 0.$$
It follows that the mean-zero space 
$$L^2_0(\Omega) \ := \ \setb{f\in L^2(\Omega)}{ \int_\Omega f \di \mu = 0},$$
which is the orthogonal complement of $1$, is invariant under each of the semi-groups $S_t$ and $S_t^\dagger$. Thus
$$\mathcal{D}_0(B) \ :=\ \mathcal{D}(B) \cap L^2_0(\Omega) \quad \text{ and } \quad \mathcal{D}_0(B^\dagger) \ := \ \mathcal{D}(B^\dagger) \cap L^2_0(\Omega)$$
are each dense in $L^2_0(\Omega)$. We require strict accretivity for $B$ and $B^\dagger$ on $L^2_0(\Omega)$.  For technical reasons, related to the controlling perturbations of the semi-group $\e^{-tB}$, we also assume that $B$ is \emph{sectorial}.
\begin{ass}[\emph{Gap Condition and Sectoriality of $B$}]  There are $T >0$ and $b,q\in \R$  such that 
\begin{equation}\label{eq:gap} \Re \ipc{f}{Bf} \ \ge \ \frac{1}{T} \norm{f}_{2}^2
\end{equation}
and 
\begin{equation}\label{eq:sector}
\abs{\Im \ipc{f}{Bf}} \ \le \ q \abs{\Re \ipc{f}{Bf} + b }
\end{equation}
for all $f\in \mathcal{D}_0(B)$.  Here $\ipc{f}{g} = \int_\Omega \overline{f} g \di \mu$ denotes the inner product on $L^2(\Omega)$.
\end{ass}

The parameter $T$ represents the characteristic time scale for the process to decorrelate.   For example, it follows from the gap condition eq.~\eqref{eq:gap} that
$$\Ev{f(\omega_t) g(\omega_s)} \ = \ \int_\Omega f \di \mu \int_\Omega g\di \mu \ + \ \mathcal{O} \left ( \e^{-\nicefrac{|t-s|}{T}} \right ), \quad |t-s| \rightarrow \infty.$$
Another consequence of the gap condition eq.\ \eqref{eq:gap} is that the generator $B$ is invertible on $L^2_0(\Omega)$.  We will abuse  notation and use $B^{-1}$ to denote the inverse of $B \mid_{L^2(\Omega)}$.  So 
$$ B^{-1} \ := \ \lim_{\epsilon \rightarrow 0} B (B + \epsilon I)^{-2}$$
where $I$ denotes the identity map.

Finally, we require the following for the functions $v_x$ giving rise to the potential in eq.~\eqref{eq:MSE}.
\begin{ass}[\emph{Translation covariance and non-degeneracy of the potential}]
The functions $v_x$ are given by $v_x = v_0 \circ \sigma_x$ where $v_0 \in L^\infty(\mu)$. Furthermore, there is $\chi >0$  such that
\begin{equation}\label{eq:vnondeg}\norm{B^{-1} (v_x - v_0 )}_{L^2(\Omega)} \ \ge \ \chi , \quad x\neq 0, \ x \in \Z^d.
\end{equation}
\end{ass}

Since the Markov process is distributionally translation invariant, the generator commutes with shifts: $B(f\circ \sigma_x) = (Bf)\circ\sigma_x$. Thus eq.\ \eqref{eq:vnondeg} implies
$$\inf_{x\neq y} \norm{B^{-1} (v_x - v_y)}_{L^2(\Omega)} \ \ge \ \chi.$$
The main point of this condition is that the potentials at different sites are \emph{different} in a uniform quantitative way.\footnote{This condition is trivial to verify if, as in the first and third simple examples from the prior section, the potentials $v_x(\omega)$ are independent and undergo independent Markov processes.  Indeed, in that case
$$\norm{B^{-1} (v_x -v_0)}^2 \ = \ \norm{B^{-1} (v_x - \Ev{v_x})}^2 + \norm{B^{-1} (v_0 - \Ev{v_0})}^2 \ = \ 2 \norm{B^{-1} (v_0 - \Ev{v_0})}^2$$
is independent of $x\neq 0$ by translation invariance and since the cross terms
$$\ipc{B^{-1} (v_x -\Ev{v_x})}{B^{-1} (v_0 - \Ev{v_0})} \quad \text{and} \quad \ipc{B^{-1} (v_0 - \Ev{v_0})}{B^{-1} (v_x -\Ev{v_x})}$$
vanish by independence.}

\subsection{Results}\label{sec:results} Given $\psi_0\in \ell^2(\Z^d)$ we define 
\begin{equation}\label{eq:CF}
M_t(\vec{z};\psi_0) = \sum_{x\in \Z^d} \e^{\im \vec{z}\cdot x} \Ev{\abs{\psi_t(x)}^2}
\end{equation}
where $\psi_t$ is the solution to eq.~\eqref{eq:MSE} with initial value $\psi_0$.   For $\vec{z} \in \R^d$, this is the \emph{characteristic function} (CF) of the $\Z^d$-valued random variable $X$ with distribution  $\Pr(X=x)= \mathbb{E}({\abs{\psi_t(x)}^2})$.    Note that formally we may recover position moments from $M_t(\vec{z};\psi_0)$ by evaluating various derivatives with respect to $\vec{z}$ at $\vec{z}=0$.  For example
$$\sum_{x\in \Z^d} \abs{x}^2 \Ev{\abs{\psi_t(x)}^2} \ = \ - \left . \Delta M_t(\vec{z};\psi_0) \right |_{\vec{z}=0}$$
where $\Delta = \nicefrac{\partial^2}{\partial z_1^2} + \cdots + \nicefrac{\partial^2}{\partial z_d^2}$ denotes the Laplacian.

For $\vec{z}\in \R^d$, $M_t(\vec{z},\psi_0)$ is the Fourier series transform of the $\ell^1$ function $\mathbb{E} ( \abs{\psi_t(x)}^2)$.  In particular, it is a continuous function of $\vec{z}$ bounded by $\norm{\psi_0}_{\ell^2}^2$. However, for our main result, we will be concerned with complex $\vec{z}$.  For this purpose, we will need to restrict our attention to initial vectors $\psi_0$ satisfying an exponential bound 
\begin{equation}\label{eq:Amu}  \ \sup_{x} \e^{\mu|x|}\abs{\psi_0(x)} < \infty\end{equation}
for some $\mu >0$ \textemdash \ without loss of generality we may take $\mu <m$ with $m$ as in Assumption \ref{ass:K}.  For such $\psi_0$, we will prove that $M_t(\vec{z},\psi_0)$ defines an analytic function of $\vec{z}$ in the rectangular neighborhood 
\begin{equation}\label{eq:neighborhood}R_\mu \ = \ \setb{\vec{z}\in \C^d}{ \abs{\mathrm{Im} \ z_i} < \mu, \ i=1,\ldots,d}.
\end{equation}
of $\R^d$ in $\C^d$.
\begin{lem}[Analyticity of the CF]\label{lem:CF}
If the exponential bound eq.~ \eqref{eq:Amu} holds with $\mu <m$, where  $m$ is as in Assumption \ref{ass:K}, then for each $t \ge0$ the right hand side of eq.\ \eqref{eq:CF} is absolutely convergent and defines an analytic function of $\vec{z}$ for $\vec{z}\in R_\mu$.
\end{lem}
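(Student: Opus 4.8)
The plan is to reduce analyticity of $M_t(\,\cdot\,;\psi_0)$ to one quantitative input — propagation of an exponentially weighted $\ell^2$ norm under the evolution \eqref{eq:MSE} — and then to appeal to the standard fact that a locally uniform limit of holomorphic functions is holomorphic. Throughout, write $w_{\vec a}(x):=\e^{\vec a\cdot x}$ for $\vec a\in\R^d$, and let $\abs{\cdot}$ denote the $\ell^1$-norm on $\z$.

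First, each summand $\vec z\mapsto\e^{\im\vec z\cdot x}\Ev{\abs{\psi_t(x)}^2}$ in \eqref{eq:CF} is an entire function of $\vec z$, and $\abs{\e^{\im\vec z\cdot x}}=\e^{-\Im\vec z\cdot x}$ depends only on $\Im\vec z$. For a compact set $\mc K\subset R_\mu$ choose $\delta>0$ with $\abs{\Im z_i}\le\mu-\delta$ on $\mc K$; then $\sup_{\vec z\in\mc K}\abs{\e^{\im\vec z\cdot x}}\le\e^{(\mu-\delta)\abs{x}}=\max_{\vec s\in\{-1,+1\}^d}\e^{2\vec a_{\vec s}\cdot x}\le\sum_{\vec s\in\{-1,+1\}^d}\e^{2\vec a_{\vec s}\cdot x}$, where $\vec a_{\vec s}:=\tfrac{\mu-\delta}{2}\vec s$ satisfies $\max_i\abs{(\vec a_{\vec s})_i}<\mu$. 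Hence it is enough to prove that for every $t\ge0$ and every $\vec a\in\R^d$ with $\max_i\abs{a_i}<\mu$,
\[
\Ev{\ \sum_{x\in\z}\e^{2\vec a\cdot x}\abs{\psi_t(x)}^2\ }\ =\ \Ev{\,\norm{w_{\vec a}\psi_t}_{\ell^2}^2\,}\ \le\ C(\vec a,t)\ <\ \infty,
\]
with $C(\vec a,t)$ locally bounded in $\vec a$ on $\setb{\vec a}{\max_i\abs{a_i}<\mu}$: combining this (a finite sum over $\vec s$) with Tonelli's theorem then shows the series in \eqref{eq:CF} converges absolutely for each $\vec z\in R_\mu$ and uniformly on compact subsets of $R_\mu$, so its sum is holomorphic there.

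The content is therefore the weighted estimate, and the plan is to conjugate the evolution by $w_{\vec a}$. Fix $\vec a$ with $\max_i\abs{a_i}<\mu$. The exponential bound \eqref{eq:Amu} (decay of $\psi_0$ at rate $\mu$) gives $w_{\vec a}\psi_0\in\ell^2(\z)$, with $\norm{w_{\vec a}\psi_0}_{\ell^2}$ controlled in terms of the constant in \eqref{eq:Amu} and $\mu-\max_i\abs{a_i}$. Conjugating the hopping operator yields the convolution operator $K_{\vec a}:=w_{\vec a}Kw_{\vec a}^{-1}$ with kernel $\e^{\vec a\cdot\zeta}h(\zeta)$; by the exponential boundedness \eqref{eq:hassume} of $h$ this kernel lies in $\ell^1(\z)$ — this is the step for which $\mu<m$ is needed — so $K_{\vec a}$ is bounded on $\ell^2(\z)$ by Young's inequality. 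The potential term in \eqref{eq:MSE} is multiplication by $v_x(\omega_t)$ and so commutes with $w_{\vec a}$. Expanding the (random, time-dependent) unitary propagator $U_t$ solving \eqref{eq:MSE}, well-defined since $K+V_t$ is bounded self-adjoint uniformly in $t$ and the realization, as a Dyson series and conjugating it termwise by $w_{\vec a}$ — all resulting series converging in operator norm since $\norm{K_{\vec a}}<\infty$ — shows that $\phi_t:=w_{\vec a}\psi_t$ is a well-defined $\ell^2(\z)$-valued function of $t$ solving $\im\partial_t\phi_t=(K_{\vec a}+V_t)\phi_t$ with $\phi_0=w_{\vec a}\psi_0$.

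An energy estimate closes the argument. Because $V_t$ is real-valued and diagonal, $\ipc{\phi_t}{V_t\phi_t}\in\R$, so
\[
\frac{\di}{\di t}\norm{\phi_t}_{\ell^2}^2\ =\ 2\,\Im\ipc{\phi_t}{(K_{\vec a}+V_t)\phi_t}\ =\ 2\,\Im\ipc{\phi_t}{K_{\vec a}\phi_t}\ =\ -\im\,\ipc{\phi_t}{(K_{\vec a}-K_{\vec a}^{\dagger})\phi_t}.
\]
By the self-adjointness $h(\zeta)=h(-\zeta)^{*}$ of Assumption~\ref{ass:K}, the skew-adjoint operator $K_{\vec a}-K_{\vec a}^{\dagger}$ is convolution by $2\sinh(\vec a\cdot\zeta)\,h(\zeta)$, again in $\ell^1(\z)$ by \eqref{eq:hassume}, with norm $c_{\vec a}\le\sum_\zeta\e^{\abs{\vec a\cdot\zeta}}\abs{h(\zeta)}$; hence $\bigl|\tfrac{\di}{\di t}\norm{\phi_t}_{\ell^2}^2\bigr|\le c_{\vec a}\norm{\phi_t}_{\ell^2}^2$, and Gronwall's inequality gives $\norm{\phi_t}_{\ell^2}^2\le\e^{c_{\vec a}t}\norm{w_{\vec a}\psi_0}_{\ell^2}^2$. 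This bound holds for every realization of $(\omega_t)_{t\ge0}$, so taking expectations yields the required estimate with $C(\vec a,t)=\e^{c_{\vec a}t}\norm{w_{\vec a}\psi_0}_{\ell^2}^2$, which — both factors being monotone in $\abs{\vec a}$ and finite for $\max_i\abs{a_i}<\mu$ — is locally bounded in $\vec a$ on $\setb{\vec a}{\max_i\abs{a_i}<\mu}$.

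The main obstacle is precisely the conjugation step: exponential weighting destroys the self-adjointness of $K$, so unitarity of the evolution is lost and one must instead verify by hand that the conjugated generator $-\im(K_{\vec a}+V_t)$ still produces a controlled evolution — bounded, with an explicit $\vec a$-dependent exponential rate. That this works is exactly the role of the exponential boundedness hypothesis \eqref{eq:hassume} and of the restriction $\mu<m$; everything else (the interchange of sum and expectation, the covering of $\sup_{\vec z\in\mc K}\abs{\e^{\im\vec z\cdot x}}$ by finitely many weights, and the passage from the pointwise bound to analyticity) is routine.
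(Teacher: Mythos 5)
Your proof is correct, but it takes a genuinely different route from the paper's. The paper derives Lemma~\ref{lem:CF} from a Lieb--Robinson/finite-group-velocity estimate on the \emph{propagator kernel} (Lemma~\ref{lem:FSOP}): passing to the interaction picture to discard the diagonal, one runs Gr\"onwall in the exponentially weighted $\ell^1$ quantity $\sum_x \e^{m|x-y|_R}\abs{U(t,s;x,y)}$, with a truncated weight $|x-y|_R=\min(|x-y|,R)$ to sidestep a priori finiteness, and lets $R\to\infty$ to get $\sum_x \e^{m|x-y|}\abs{U(t,s;x,y)}\le \e^{v|t-s|}$; this then gives $\sup_x \e^{\mu|x|}\Ev{\abs{\psi_t(x)}}\le A_\mu \e^{vt}$, and one upgrades to the needed weighted $\ell^1$ control on $\Ev{\abs{\psi_t(x)}^2}$ via the trivial (but essential) unitarity bound $\abs{\psi_t(x)}^2\le\norm{\psi_0}_{\ell^2}\abs{\psi_t(x)}$. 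You instead conjugate by the exponential weight $w_{\vec a}$ at the level of the \emph{wave function}, obtain a bounded but non-skew-adjoint conjugated generator $-\im(K_{\vec a}+V_t)$, and close a Gr\"onwall estimate directly in weighted $\ell^2$, with the loss of unitarity quantified by $\norm{K_{\vec a}-K_{\vec a}^\dagger}\le\sum_\zeta \e^{|\vec a\cdot\zeta|}\abs{h(\zeta)}$. Both arguments hinge on Assumption~\ref{ass:K}(2) plus Gr\"onwall, but the weighting is applied at different levels; your version is self-contained in $\ell^2$, avoids the $R$-truncation and the kernel estimate entirely, and never invokes unitarity of $U$ (the self-adjointness of $h$ enters only through $K_{\vec a}-K_{\vec a}^\dagger$), whereas the paper's route yields the finite-group-velocity bound as a separately stated lemma that it regards as independently useful. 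One step in your write-up that deserves a sentence more care is the identification $\phi_t=w_{\vec a}\psi_t$ via termwise conjugation of the Dyson series: the intermediate expressions $w_{\vec a}(K+V_{t_n})\cdots(K+V_{t_1})\psi_0$ are well-defined because $\psi_0$ decays at rate $\mu$ and convolution by $h$ preserves exponential decay of any rate $<m$; that observation (which also uses $\mu<m$) is the hinge and is worth making explicit. Modulo that, the argument is complete and correct.
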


Our main result is the following theorem on the limit of the diffusively rescaled characteristic function.
\begin{thm}[Diffusive limit for the CF]\label{thm:CF} There is a positive definite matrix  $\vec{D}  =  (D_{i,j})_{i,j=1}^d$ such that
for any $\psi_0$ that satisfies the  exponential bound eq.~ \eqref{eq:Amu} and any $t>0$,  
$$M_{\tau t }\left ( \frac{\vec{z}}{\sqrt{\tau}};\psi_0 \right ) \xrightarrow[]{\tau \rightarrow \infty} \norm{\psi_0}_{\ell^2}^2 \e^{-\frac{t}{2}  \vec{z} \cdot \vec{D} \vec{z}}$$
uniformly as $\vec{z}$ ranges over compact subsets of $\mathbb{C}^d$.
\end{thm}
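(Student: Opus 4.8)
The plan is to write $M_t(\vec z;\psi_0)$ as a matrix element of a contraction semigroup whose generator depends analytically on $\vec z$, and then to read off the diffusive limit from a single slowly decaying mode of that generator near $\vec z=0$. Passing from $\psi_t$ to the density matrix $|\psi_t\rangle\langle\psi_t|$, which solves $\partial_t\rho_t=-\im\com{K+V(\omega_t)}{\rho_t}$ with $V(\omega)=\operatorname{diag}(v_x(\omega))$, and averaging while keeping track of the current state of the environment, one obtains an evolution on the augmented space $\mathfrak H=\ell^2(\z)\otimes\overline{\ell^2(\z)}\otimes L^2(\Omega,\mu)$ generated by
$$\mathfrak L \ = \ \mathcal L_0 \ + \ \widehat{\mathcal V} \ - \ \im\,\mathbf B,\qquad \mathcal L_0=K\otimes\1-\1\otimes K,\quad \mathbf B=\1\otimes\1\otimes B,$$
where $\widehat{\mathcal V}$ is multiplication by $v_x(a)\bigl(|x\rangle\langle x|\otimes\1-\1\otimes|x\rangle\langle x|\bigr)$. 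Since $-\im\mathbf B$ is accretive, $\e^{-\im t\mathfrak L}$ is a contraction semigroup and $M_t(\vec z;\psi_0)=\ipc{\Theta_{\vec z}}{\e^{-\im t\mathfrak L}\bigl(\psi_0\otimes\overline{\psi_0}\otimes\1\bigr)}$, where $\Theta_{\vec z}$ implements the map $X\mapsto\sum_x\e^{\im\vec z\cdot x}\langle x|X|x\rangle$ followed by the $\mu$-average. By translation covariance the problem fibers over the momentum dual to the diagonal $x\mapsto(x,x)$: there is an operator $\mathfrak L_{\vec z}$ on a fiber space $\mathfrak H_{\vec z}\cong L^2(\bb T^d)\otimes L^2(\Omega)$ with $M_t(\vec z;\psi_0)=\ipc{\theta_{\vec z}}{\e^{-\im t\mathfrak L_{\vec z}}\phi_{0,\vec z}}$. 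The first point---and the reason for the exponential bounds in Assumption \ref{ass:K} and in \eqref{eq:Amu}---is that $\vec z\mapsto\mathfrak L_{\vec z},\theta_{\vec z},\phi_{0,\vec z}$ extend to analytic families on a complex neighborhood of $\R^d$: the complex twist shifts the torus variable into the strip of width $m$ that the symbol $\widehat h$ tolerates, while $\widehat{\mathcal V}_{\vec z}$ stays bounded because $v_0\in L^\infty$. This is the analytic structure behind Lemma \ref{lem:CF}.

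The main step is to carry out, now for complex $\vec z$ in a fixed ball $\set{\abs{\vec z}<\eps_0}$, the renormalized perturbation analysis of \cite{Kang2009b}. Conservation of $\sum_x\abs{\psi_t(x)}^2=\norm{\psi_0}_{\ell^2}^2$ shows that at $\vec z=0$ the operator $\im\mathfrak L_0$ annihilates the flat vector $\xi_0=\1_{\bb T^d}\otimes\1_\Omega$, and so does its adjoint. A Schur reduction onto $L^2(\bb T^d)\otimes\C\1$---the subspace on which $\mathbf B$ vanishes---produces an effective operator whose kernel is $\xi_0$ and which, by the gap condition \eqref{eq:gap} together with the potential non-degeneracy \eqref{eq:vnondeg} (whose $B^{-1}$ is precisely the self-energy evaluated near the origin), is bounded below by a multiple of $\chi^2$ on $\set{\xi_0}^\perp$; on the complement $L^2(\bb T^d)\otimes L^2_0(\Omega)$ the resolvent $(\lambda+\im Q\mathfrak L_0 Q)^{-1}$, $Q=\1-P_0$, is bounded uniformly for $\Re\lambda>-\delta$. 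Thus, after renormalization, $\im\mathfrak L_0$ has an isolated simple zero and the remainder of its spectrum in $\set{\Re\lambda\ge\delta}$. \textbf{The crux of the proof is that these estimates persist, with $\delta$ and the constants uniform, for $\im\mathfrak L_{\vec z}$ with $\abs{\vec z}<\eps_0$.} The difficulty is that the complex twist makes $\widehat{\mathcal V}_{\vec z}$ and the $\mathcal L_0$-part non-self-adjoint, so the unitarity exploited in \cite{Kang2009b} is gone; it is replaced by the sectoriality assumption \eqref{eq:sector} on $B$, which is exactly what is needed to control the perturbation series and resolvent comparisons for $\e^{-\im t\mathfrak L_{\vec z}}$ in the absence of symmetry. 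A by-product of these bounds is the uniform-in-time estimate $\sup_{t\ge0}\sup_{\abs{\vec z}<\eps_0'}\abs{M_t(\vec z;\psi_0)}<\infty$ advertised in the abstract.

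Granting this, for $\abs{\vec z}<\eps_0$ the operator $\im\mathfrak L_{\vec z}$ has a simple eigenvalue $E(\vec z)$, analytic in $\vec z$, with eigenprojection $P_{\vec z}$ (the analytic continuation of $P_0$), and
$$\e^{-\im t\mathfrak L_{\vec z}} \ = \ \e^{-tE(\vec z)}\,P_{\vec z} \ + \ \mathcal R_t(\vec z),\qquad \norm{\mathcal R_t(\vec z)} \ \le \ C\,\e^{-\delta t},$$
uniformly for $\abs{\vec z}<\eps_0$. Expanding, $E(0)=0$ (conservation), $\nabla E(0)=0$ (no drift---the cancellation $\int_{\bb T^d}\nabla\widehat h=0$ that holds once the noise equilibrates the momentum distribution, the same one underlying \eqref{eq:diffusion}), and the Hessian of $E$ at the origin is $\vec D$, which is real---because $\Ev{\abs{\psi_t(x)}^2}\ge0$ forces $\overline{M_t(\vec z)}=M_t(-\vec z)$, a property inherited by the limit---and positive definite, $\vec k\cdot\vec D\vec k>0$ for $\vec k\neq0$, by the non-degeneracy condition on $K$ in Assumption \ref{ass:K}; in particular $\vec D$ does not depend on $\psi_0$ or on $t$. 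Setting $t\to\tau t$ and $\vec z\to\vec z/\sqrt\tau$ (which lies in $\set{\abs{\cdot}<\eps_0}$ once $\tau$ is large, for any bounded $\vec z$), and using $\ipc{\theta_{\vec w}}{P_{\vec w}\phi_{0,\vec w}}\to\ipc{\theta_0}{P_0\phi_{0,0}}=\norm{\psi_0}_{\ell^2}^2$ as $\vec w\to0$, we obtain
$$M_{\tau t}\!\left(\tfrac{\vec z}{\sqrt\tau};\psi_0\right) \ = \ \ipc{\theta_{\vec z/\sqrt\tau}}{P_{\vec z/\sqrt\tau}\,\phi_{0,\vec z/\sqrt\tau}}\;\e^{-\tau t\,E(\vec z/\sqrt\tau)} \ + \ O\!\left(\e^{-\delta\tau t}\right) \ \xrightarrow{\tau\to\infty} \ \norm{\psi_0}_{\ell^2}^2\,\e^{-\frac{t}{2}\,\vec z\cdot\vec D\vec z},$$
since $\tau E(\vec z/\sqrt\tau)\to\tfrac12\vec z\cdot\vec D\vec z$; all the bounds being uniform for $\vec z$ in compact sets, the convergence is too, which is the assertion of the theorem. (Alternatively, one may take the limit only for real $\vec z$, where it is \eqref{eq:diffusion}, and then upgrade it by Vitali's theorem: by Lemma \ref{lem:CF} the functions $\vec z\mapsto M_{\tau t}(\vec z/\sqrt\tau;\psi_0)$ are analytic on a fixed complex neighborhood of $\R^d$, by the uniform exponential-moment bound they are locally uniformly bounded there, and they converge on $\R^d$; hence they converge locally uniformly to the entire function $\norm{\psi_0}_{\ell^2}^2\,\e^{-\frac{t}{2}\vec z\cdot\vec D\vec z}$.)
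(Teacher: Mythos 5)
Your proposal follows essentially the same route as the paper: a Feynman--Kac--Pillet formula casting $M_t(\vec z;\psi_0)$ as a semigroup matrix element $\ipc{\delta_0\otimes 1}{\e^{-t\widehat L_{\vec z}}\widehat\rho_{\vec z}\otimes 1}$, spectral isolation of a simple slow eigenvalue $E(\vec z)$ near $\vec z=0$ with a uniform sector-shaped gap to the rest of the spectrum, and the Taylor expansion $\tau E(\vec z/\sqrt\tau)\to\tfrac12\vec z\cdot\vec D\vec z$ --- the Fourier fiber $L^2(\mathbb T^d)\otimes L^2(\Omega)$ you introduce is just the momentum-space picture of the paper's $L^2(\Z^d\times\Omega)$, and the Vitali alternative you note at the end is exactly the one the paper mentions in passing. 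The one place your sketch is thin is the spectral-gap claim for $\widehat L_{\vec 0}$ and its resolvent bound: you correctly flag it as the crux and gesture at a Schur reduction, but the quantitative lower bound there is a genuinely non-perturbative accretive-block estimate (Lemmas A.1--A.2 in the paper's Appendix), and it is that step --- not the subsequent perturbation in $\vec z$, which is a soft norm-perturbation argument, nor the sectoriality input, which is used mainly to get holomorphic-semigroup decay off the spectral projection --- that carries the technical weight.
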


From Theorem \ref{thm:CF}, the diffusive scaling for position moments, eq.\ \eqref{eq:diffusivescalingconjecture}, follows from the following ``central limit theorem.''
\begin{cor}[Central limit theorem for position averages]\label{cor:Diffusivescaling} 
Let $f:\R^d \rightarrow \R$ be a polynomially bounded continuous function.  If $\psi_0$ satisfies the exponential bound eq.~\eqref{eq:Amu}, then
$$\lim_{t\rightarrow \infty}  \sum_{x\in \Z^d} f\left (\frac{x}{\sqrt{t}} \right )
\Ev{\abs{\psi_t(x)}^2} \ = \ \norm{\psi_0}^2_{\ell^2} \int_{\R^d} f(\vec{x}) \frac{1}{\sqrt{\pi^d \det \vec{D}} } \e^{-\frac{1}{2}\vec{x} \cdot \vec{D}^{-1} \vec{x} } \di \vec{x},$$
with $\vec{D}$ as in Theorem \ref{thm:CF}. In particular, diffusive scaling, eq.\ \eqref{eq:diffusivescalingconjecture}, holds.
\end{cor}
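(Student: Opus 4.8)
\subsection*{Proof proposal for Corollary~\ref{cor:Diffusivescaling}}

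The plan is to recast the sums $\sum_x f(x/\sqrt t)\Ev{\abs{\psi_t(x)}^2}$ as integrals of $f$ against a family of probability measures on $\R^d$, to identify the weak limit of that family via Theorem~\ref{thm:CF} and L\'evy's continuity theorem, and then to upgrade the resulting weak convergence to convergence against polynomially bounded $f$ using the uniform exponential moment bound that Theorem~\ref{thm:CF} also supplies.

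First I would note that for every fixed path $(\omega_s)_{s\ge 0}$ the operator $K + v_{\cdot}(\omega_s)$ is bounded and self-adjoint, so the propagator solving eq.~\eqref{eq:MSE} is unitary and $\norm{\psi_t}_{\ell^2} = \norm{\psi_0}_{\ell^2}$; averaging gives $\sum_{x}\Ev{\abs{\psi_t(x)}^2} = \norm{\psi_0}_{\ell^2}^2$ for all $t\ge 0$. Hence, for $t>0$,
$$P_t \ := \ \frac{1}{\norm{\psi_0}_{\ell^2}^2}\sum_{x\in\Z^d}\Ev{\abs{\psi_t(x)}^2}\,\delta_{x/\sqrt t}$$
is a Borel probability measure on $\R^d$ with characteristic function $\widehat{P_t}(\vec z) = \norm{\psi_0}_{\ell^2}^{-2}\,M_t(\vec z/\sqrt t;\psi_0)$. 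Applying Theorem~\ref{thm:CF} with the corollary's $t$ playing the role of $\tau$ and the theorem's time parameter equal to $1$ gives $\widehat{P_t}(\vec z)\to \e^{-\frac12\vec z\cdot\vec D\vec z}$ for every $\vec z\in\R^d$; since $\vec D$ is positive definite this is the characteristic function of the centered Gaussian probability measure $G$ on $\R^d$ with covariance $\vec D$, i.e.\ the measure whose density appears in the statement. By L\'evy's continuity theorem~\cite[Theorem 3.3.6]{Durrett}, $P_t\Rightarrow G$ weakly as $t\to\infty$, which already yields the asserted limit whenever $f$ is bounded and continuous.

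To treat general polynomially bounded $f$ I would extract a uniform exponential moment bound for the $P_t$. Fix $0<\delta<\mu$. By Theorem~\ref{thm:CF} the convergence of $\vec z\mapsto M_t(\vec z/\sqrt t;\psi_0)$ is uniform on compact subsets of $\C^d$, so for all sufficiently large $t$ these functions are defined on the closed polydisc $\set{\abs{z_i}\le\delta}$ (by Lemma~\ref{lem:CF}, since then $\vec z/\sqrt t\in R_\mu$ there) and uniformly bounded on it, say by $C_\delta$. Evaluating at $\vec z = \pm\im\delta\,\eb_j$ and averaging gives
$$\sum_{x\in\Z^d}\cosh\!\left(\frac{\delta x_j}{\sqrt t}\right)\Ev{\abs{\psi_t(x)}^2} \ \le \ C_\delta,\qquad j=1,\dots,d,$$
and hence, using $\abs{\vec x}\le\sqrt d\,\max_j\abs{x_j}$, a bound $\int_{\R^d}\e^{c\abs{\vec x}}\di P_t(\vec x)\le C'$ for all sufficiently large $t$, with $c>0$ and $C'$ independent of $t$. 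Combined with $P_t\Rightarrow G$, a standard cutoff argument finishes the proof: writing $f = f\chi_R + f(1-\chi_R)$ with $\chi_R$ a continuous cutoff to $\set{\abs{\vec x}\le R}$, the first term is bounded continuous and converges, while for $\abs{f(\vec x)}\le C(1+\abs{\vec x})^N$ one has $\abs{\int f(1-\chi_R)\di P_t}\le C\,\big(\sup_{s\ge R}(1+s)^N\e^{-cs}\big)\int\e^{c\abs{\vec x}}\di P_t \to 0$ as $R\to\infty$, uniformly in $t$ (and likewise for $G$, which has all exponential moments). Multiplying through by $\norm{\psi_0}_{\ell^2}^2$ gives the corollary; taking $f(\vec x)=\abs{\vec x}^p$ yields $\sum_x\abs{x}^p\Ev{\abs{\psi_t(x)}^2}\sim\big(\norm{\psi_0}_{\ell^2}^2\int_{\R^d}\abs{\vec x}^p\di G(\vec x)\big)\,t^{p/2}$, which is eq.~\eqref{eq:diffusivescalingconjecture}.

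The only analytic input is Theorem~\ref{thm:CF} (together with Lemma~\ref{lem:CF}); the rest is soft measure theory. I expect the part needing the most care to be the passage from the uniform-on-compacts statement of Theorem~\ref{thm:CF} to a clean uniform-in-$t$ exponential moment bound for the $P_t$ and its use in the cutoff estimate; one should also double-check the book-keeping relating the scaling conventions in eq.~\eqref{eq:CF} and Theorem~\ref{thm:CF} to the normalization of the Gaussian density in the statement.
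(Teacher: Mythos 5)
Your proof is correct, but it takes a genuinely different route from the one the paper uses. The paper proceeds entirely on the Fourier side: it first establishes the result for $f(\vec{x})=\abs{\vec{x}}^{2n}$ by observing that uniform convergence of the analytic functions $M_t(\vec{k}/\sqrt{t};\psi_0)$ on compacts implies uniform convergence of all their derivatives at $\vec{k}=0$, extracts from this a uniform bound on $\sum_x (\abs{x}^{2n}/t^n)\Ev{\abs{\psi_t(x)}^2}$, reduces the general polynomially bounded $f$ to $g(1+\abs{\vec{x}}^{2n})$ with $g\in C_0$ having integrable Fourier transform, and concludes by Fourier inversion/Plancherel together with dominated convergence on the $\vec{k}$-side. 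You instead package the scaled amplitudes into probability measures $P_t$, invoke L\'evy's continuity theorem to obtain weak convergence $P_t\Rightarrow G$, extract a uniform exponential moment bound $\sup_{t\ \text{large}}\int \e^{c\abs{\vec{x}}}\,\di P_t<\infty$ by evaluating $M_t$ at purely imaginary $\vec{z}$ on a fixed polydisc, and finish with a cutoff/uniform-integrability argument. This is precisely the alternative the authors flag in their introduction (``An alternative proof of the scaling eq.~\eqref{eq:diffusivescalingconjecture} could proceed by L\'evy's continuity theorem and the uniform bound on exponential moments that is implied by our analytic continuation argument''). What the paper's route buys is self-containment --- no cutoff, no appeal to weak-convergence machinery, just differentiating and integrating the CF directly --- while your route buys conceptual clarity and yields the stronger byproduct that all exponential moments of $P_t$ are uniformly bounded, not merely polynomial ones. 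One small point worth making explicit in your write-up: to pass from the coordinatewise bounds $\int\cosh(\delta x_j/\sqrt t)\,\di P_t\le C_\delta$ to $\int\e^{c\abs{\vec{x}}}\,\di P_t\le C'$, use $\e^{c\abs{\vec{x}}}\le\e^{c\sqrt d\,\max_j\abs{x_j}}\le\sum_j\e^{c\sqrt d\,\abs{x_j}}\le 2\sum_j\cosh(c\sqrt d\,x_j)$ and take $c=\delta/\sqrt d$; the version written (``using $\abs{\vec x}\le\sqrt d\max_j\abs{x_j}$'') is correct but elides this step.
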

\begin{rem*} If $\norm{\psi_0}^2=1$, then this really is a central limit theorem for the family of random variables $X_t$ where $\bb{P}(X_t = x) = \Ev{\abs{\psi_t(x)}^2}$.  The result is that $\nicefrac{X_t}{\sqrt{t}}$ converges in distribution to an $\R^d$-valued Gaussian variable with covariance $\vec{D}$.
\end{rem*}

\begin{proof}
To begin, note that the result holds for $f(x) = |x|^{2n}$ with $n=1,2,\ldots$.  Indeed, uniform convergence of analytic functions, as in Theorem \ref{thm:CF}, implies (uniform) convergence of derivatives to any order, so
\begin{multline*}\frac{1}{t^n} \sum_{x\in \Z^d} |x|^{2n} \Ev{\abs{\psi_t(x)}^2} \ = \ \left .\left ( - \Delta_{\vec{k}} \right )^n M_t\left ( \frac{\vec{k}}{\sqrt{t}};\psi_0 \right ) \right |_{\vec{k}=0} \\ \xrightarrow[]{t\rightarrow \infty} \
\norm{\psi_0}^2 \left . \left ( - \Delta_{\vec{k}} \right )^n \e^{-\frac{1}{2} \vec{k}\cdot \vec{D}\vec{k}} \right |_{\vec{k}=0} \ = \ \frac{\norm{\psi_0}^2}{\sqrt{\pi^d \det \vec{D}}}\int_{\R^d} \abs{\vec{x}}^{2n} \e^{-\frac{1}{2} \vec{x}\cdot \vec{D}^{-1} \vec{x}} \di \vec{x},
\end{multline*}
where $\Delta_{\vec{k}}$ denotes the Laplacian $\sum_j \nicefrac{\partial^2}{\partial k_j^2}$.   In particular, note that
\begin{equation}\label{eq:uniformlybounded}
\abs{ \left ( -\Delta_{\vec{k}} \right )^n M_t \left ( \frac{\vec{k}}{\sqrt{t}};\psi_0 \right ) } \ \le \  \sum_{x\in \Z^d} \frac{|x|^{2n}}{t^n} \Ev{|\psi_t(x)|^2} \ \le \ C_{n} \ < \ \infty
\end{equation}
for each $n$, uniformly in $t>0$ and $\vec{k}\in \R^d$.

To prove the general statement it clearly suffices to prove
$$\lim_{t\rightarrow \infty}  \sum_{x\in \Z^d} g\left (\frac{x}{\sqrt{t}} \right ) \left ( 1 + \frac{|x|^{2n}}{t^n} \right )
\Ev{\abs{\psi_t(x)}^2} \ = \ \norm{\psi_0}^2_{\ell^2} \int_{\R^d} g(\vec{x}) \frac{\left ( 1 + |\vec{x}|^{2n} \right )}{\sqrt{\pi^d \det \vec{D}} } \e^{-\frac{1}{2}\vec{x} \cdot \vec{D}^{-1} \vec{x} } \di \vec{x}$$
for arbitrary $g\in C_0(\R^d)$. By an approximation argument, using eq.\ \eqref{eq:uniformlybounded}, it suffices to prove this under the additional assumption that the Fourier transform $\widehat{g}\in L^1(\R^d)$ (such functions are uniformly dense in $C_0(\R^d)$). If $\widehat{g}\in L^1(\R^d)$ then
$$\sum_{x\in \Z^d} g\left (\frac{x}{\sqrt{t}} \right ) \left ( 1 + \frac{|x|^{2n}}{t^n} \right )\Ev{\abs{\psi_t(x)}^2} \ = \  \frac{\norm{\psi_0}_{\ell^2}^2 }{(2\pi)^d} \int_{\R^d} \widehat{g}(\vec{k}) \left ( 1 + (-\Delta_{\vec{k}})^n \right ) M_t\left ( \frac{\vec{k}}{\sqrt{t}} ;\psi_0 \right ) \di \vec{k}.$$
Let $\epsilon >0$. By eq.\ \eqref{eq:uniformlybounded} and since $\widehat{g}\in L^1$, we may find a compact set $F_{\epsilon}$ so that
\begin{multline*}\sum_{x\in \Z^d} g\left (\frac{x}{\sqrt{t}} \right ) \left ( 1 + \frac{|x|^{2n}}{t^n} \right )\Ev{\abs{\psi_t(x)}^2} \\ = \   \frac{\norm{\psi_0}_{\ell^2}^2 }{(2\pi)^d} \int_{F_{\epsilon}} \widehat{g}(\vec{k}) \left ( 1 + (-\Delta_{\vec{k}})^n \right ) M_t\left ( \frac{\vec{k}}{\sqrt{t}} ;\psi_0 \right ) \di \vec{k} \ + \ \mathcal{O}(\epsilon).
\end{multline*}
Since the derivatives of $M_t(\nicefrac{\vec{k}}{\sqrt{t}};\psi_0)$ converge uniformly on compact sets we conclude, by taking $t\rightarrow \infty$  and then $\epsilon \rightarrow 0$, that 
\begin{align*} \lim_{t\rightarrow \infty} \sum_{x\in \Z^d}  g\left (\frac{x}{\sqrt{t}} \right ) \left ( 1 + \frac{|x|^2n}{t^n} \right ) \Ev{\abs{\psi_t(x)}^2}
\ &= \ \frac{\norm{\psi_0}_{\ell^2}^2 }{(2\pi)^d} \int_{\R^d} \widehat{g}(\vec{k}) \left ( 1 + (-\Delta_{\vec{k}})^n \right ) \e^{-\frac{1}{2}  \vec{k} \cdot D \vec{k}} \di \vec{k}   \\
\ &= \ \norm{\psi_0}_{\ell^2}^2 \int_{\R^d} g(\vec{x})  \frac{\left ( 1 + \abs{\vec{x}}^{2n} \right ) }{\sqrt{(2\pi)^2 \det \vec{D}}} \e^{-\frac{1}{2}  \vec{x} \cdot D^{-1} \vec{x}} \di \vec{x}  \end{align*}
by Plancherel's formula. 
\end{proof}

\subsection{Organization}
The rest of the paper is organized in sections as follows:
\begin{itemize}
\item[\S\ref{sec:analytic}] \underline{\nameref{sec:analytic}}. Here we prove analyticity of the characteristic function using a general result on the finite group velocity for exponentially bounded hopping.
\item[\S\ref{sec:Feynman-Kac}]\underline{\nameref{sec:Feynman-Kac}}. In this section we review the formula from \cite{Kang2009b} that relates the characteristic function for real $\vec{z}$ to matrix elements of a semi-group and extend that formula to complex $\vec{z}$.
\item[\S\ref{sec:spectral}] \underline{\nameref{sec:spectral}}. The basic technical estimates bounding the location of the spectrum of the generators of the semi-groups of \S\ref{sec:Feynman-Kac} are stated here.
\item[\S\ref{sec:proof}]\underline{\nameref{sec:proof}}.
\item[] \hspace{-.5in} \underline{Appendix}
\item[\S\ref{sec:abstract}]\underline{\nameref{sec:abstract}}. The technical estimates of \S\ref{sec:spectral} are based on two abstract operator lemmas stated and proved here.
\item[\S\ref{sec:FSOPproof}]\underline{\nameref{sec:FSOPproof}}.
\end{itemize}

\subsection*{A word on notation} Given an operator $A$ on $\ell^2(\Z^d)$ we will denote the operator kernel of $A$ by $A(x,y)$.  That is
$$ A(x,y) = \ipc{\delta_x}{ A\delta_y}$$
so that
$$A\psi(x) \ = \ \sum_{y} A(x,y)\psi(y).$$
As above, we will use $I$ to denote the identity operator so that $I(x,y)$ is the Kronecker delta $\delta_{x,y}$.
If the operator $A$ depends on some other parameters, as $A(t_1,t_2,...)$ we will denote the kernel by $A(t_1,t_2,...;x,y)$.

\section{Finite group velocity and analyticity for the CF}\label{sec:analytic}
Analyticity of the CF (Lemma \ref{lem:CF}) follows from a \emph{finite group velocity} estimate for equations of the form eq.\ \eqref{eq:MSE} with exponentially bounded hopping terms. 
\begin{lem}[Finite Group Velocity]\label{lem:FSOP}
Let $U(t,s)$ be the unique solution to \begin{equation}\label{eq:propagator}i \partial_t U(t,s)= H(t) U(t,s) , \quad U(s,s)=I
\end{equation}
where the time dependent Hamiltonian $H(t)$ satisfies 
\begin{equation}\label{eq:expbound} v := \sup_{t}\sup_{y} \sum_{x\neq y}\e^{m|x-y|} \abs{H(t;x,y)} < \infty.
\end{equation}
Then
\begin{equation}\label{eq:fsop}\sum_{x} e^{m|x-y|} \abs{U(t,s;x,y)} \ \le \ e^{v|t-s|}.
\end{equation}
\end{lem}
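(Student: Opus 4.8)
The plan is to pass to an interaction picture that resums the diagonal part of $H(t)$, and then bound the remaining off-diagonal evolution by a Dyson series in a weighted kernel norm. Decompose $H(t)=D(t)+V(t)$, where $D(t;x,y)=H(t;x,x)\delta_{x,y}$ is the diagonal part and $V(t)$ the off-diagonal part. The hypothesis eq.~\eqref{eq:expbound} is a bound on $V(t)$ alone, since its sum excludes the diagonal; this is precisely why the diagonal must be removed before any weighted kernel estimate, as it could otherwise contribute a large (even $x$-unbounded) supremum. Since $H(t)$ is self-adjoint, $H(t;x,x)\in\R$, so the diagonal evolution $W(t,s)$ defined by $\im\partial_tW(t,s)=D(t)W(t,s)$, $W(s,s)=I$, is diagonal and unimodular: $W(t,s;x,y)=\e^{-\im\int_s^{t}H(r;x,x)\, dr}\,\delta_{x,y}$. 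Putting $U(t,s)=W(t,s)\tilde U(t,s)$, the factor $\tilde U$ solves $\im\partial_t\tilde U(t,s)=\tilde V(t)\tilde U(t,s)$, $\tilde U(s,s)=I$, with $\tilde V(t)=W(t,s)^{-1}V(t)W(t,s)$; as $W$ is diagonal and unimodular, $|\tilde V(t;x,y)|=|V(t;x,y)|$, so $\tilde V(t)$ obeys the same bound as $V(t)$ with the same constant $v$.

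I would then introduce the weighted kernel norm $\|A\|_m:=\sup_y\sum_x\e^{m|x-y|}|A(x,y)|$ and record the two elementary facts $\|I\|_m=1$ and the submultiplicativity $\|AB\|_m\le\|A\|_m\|B\|_m$, the latter following from $|(AB)(x,z)|\le\sum_y|A(x,y)||B(y,z)|$ together with $\e^{m|x-z|}\le\e^{m|x-y|}\e^{m|y-z|}$. In this language the hypothesis reads $\|\tilde V(t)\|_m\le v$ for all $t$, the conclusion eq.~\eqref{eq:fsop} reads $\|U(t,s)\|_m\le\e^{v|t-s|}$, and $\|U(t,s)\|_m=\|\tilde U(t,s)\|_m$ because $W(t,s)$ is diagonal and unimodular. (Note that $\|\cdot\|_m$ dominates the $\ell^1\to\ell^1$ operator norm, so convergence in $\|\cdot\|_m$ produces a genuine bounded operator.)

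Finally I would expand $\tilde U$ in its time-ordered Dyson series: for $t\ge s$, $\tilde U(t,s)=\sum_{n\ge0}(-\im)^n\int_{s\le r_n\le\cdots\le r_1\le t}\tilde V(r_1)\cdots\tilde V(r_n)\, dr$. Submultiplicativity bounds the $n$th term by $v^n|t-s|^n/n!$ in $\|\cdot\|_m$, so the series converges in that norm and $\|\tilde U(t,s)\|_m\le\sum_{n\ge0}(v|t-s|)^n/n!=\e^{v|t-s|}$, which is eq.~\eqref{eq:fsop}. Equivalently, Grönwall's inequality applied to the Volterra estimate $\|\tilde U(t,s)\|_m\le 1+v\int_s^t\|\tilde U(r,s)\|_m\, dr$ gives the same bound. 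The case $t<s$ is identical, integrating from $t$ to $s$.

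There is no deep obstacle here; the two points meriting care are the following. First, eq.~\eqref{eq:expbound} controls only the off-diagonal part of $H(t)$, so the exact diagonal conjugation is not cosmetic: it is what makes the constant in eq.~\eqref{eq:fsop} equal to $v$ rather than something involving the uncontrolled diagonal. Second, attempting a differential inequality for $t\mapsto\sum_x\e^{m|x-y|}|U(t,s;x,y)|$ directly would stumble on the non-differentiability of $|U(t,s;x,y)|$ at its zeros; running the estimate through the Dyson series avoids this entirely (alternatively, one could regularize $|U|$ by $\sqrt{|U|^2+\eps}$ and let $\eps\downarrow0$). Well-posedness of eq.~\eqref{eq:propagator} is presumed in the statement, and after conjugation the generator $\tilde V(t)$ is bounded on $\ell^2$ with norm at most $v$, so the Dyson series manifestly converges and furnishes the solution.
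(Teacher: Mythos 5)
Your proof is correct, and it rests on the same central idea as the paper's: pass to an interaction picture that strips out the (uncontrolled) diagonal part of $H(t)$, observe that the conjugation leaves $|U(t,s;x,y)|$ unchanged, and then estimate the off-diagonal evolution in the weighted kernel norm $\norm{A}_m = \sup_y\sum_x \e^{m|x-y|}|A(x,y)|$. Where the two arguments part ways is in how the final bound is extracted. The paper keeps the Volterra integral inequality but confronts the fact that it is not known \emph{a priori} that $\norm{W(t,s)}_m$ is finite; it resolves this by replacing the weight $\e^{m|x-y|}$ with the truncated $\e^{m\,\min(|x-y|,R)}$, noting finiteness for each $R$ (from the plain $\ell^1\to\ell^1$ estimate), applying Gr\"onwall, and then letting $R\to\infty$. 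You instead expand $\tilde U$ in its Dyson series and bound each term in $\norm{\cdot}_m$ by submultiplicativity, so the series converges in that norm with sum at most $\e^{v|t-s|}$; combined with $\ell^2$-convergence of the same series to the unique propagator (e.g.\ via a Fatou argument on the kernels), this gives eq.~\eqref{eq:fsop}. Your route neatly sidesteps the a priori finiteness question because it \emph{constructs} the solution inside the weighted class rather than estimating a given one. One small caveat: your closing remark that Gr\"onwall applied to $\norm{\tilde U(t,s)}_m\le 1+v\int_s^t\norm{\tilde U(r,s)}_m\,dr$ gives the bound ``equivalently'' is a bit too quick\textemdash that Volterra inequality is only useful once you already know the left side is finite, which is exactly the point the paper's $R$-truncation (or your Dyson series) is there to supply. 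Also note that the worry you flag about non-differentiability of $|U|$ is not really in play: the paper works with the integral form of the inequality, not a differential one, so no regularization of $|U|$ is needed.
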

\begin{rem*}The operator $U(t,s)$ is called the \emph{propagator} for the time dependent equation
$$i\partial_t \psi_t = H(t) \psi_t,$$
since it relates the solution at different times $\psi_t = U(t,s)\psi_s$.  If $H(t)$ is self-adjoint for every $t$, as in eq.~\eqref{eq:MSE}, then $U(t,s)$ is unitary.  However, the estimate eq.~\eqref{eq:fsop} does not use unitarity.  

Lem.\ \ref{lem:FSOP} is well known \textemdash \ essentially it is the simplification to this context of a more general Lieb-Robinson bound valid for lattice spin systems, see \cite{Nachtergaele2006} and references therein.  For completeness we include a short   proof below in Appendix \ref{sec:FSOPproof}.  The result may be interpreted as stating that (up to exponentially small tails) the propagator  $U(t,s;x,y)$ is negligible between sites $x,y$ with $|x-y| > v |t-s|$.  Thus the group velocity of the wave  is no larger than $v$.   Note that the upper bound $v$ on the group velocity is \emph{completely insensitive} to the diagonal terms $H(t;x,x)$. 
\end{rem*}

Lemma \ref{lem:CF} is a consequence of the following corollary of the Finite Group Velocity Lemma:
\begin{lem}\label{lem:expbound} If $\psi_0$ is exponentially bounded in the sense of eq.~\eqref{eq:Amu}, then for all $t>0$
$$\sup_{x} \e^{\mu|x|}\Ev{\abs{\psi_t(x)}} \le A_\mu \e^{vt} $$ 
where $A_\mu$ denotes the left hand side of eq.~\eqref{eq:Amu}. In particular, there is $C <\infty$ so that for $\lambda<\mu$ we have
\begin{equation}\label{eq:expsum} \sum_{x} \e^{\lambda |x|}\Ev{\abs{\psi_t(x)}^2} < \frac{Ce^{vt}}{(\mu-\lambda)^d}
\end{equation}
and the conclusions of Lemma \ref{lem:CF} hold.
\end{lem}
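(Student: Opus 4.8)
The plan is to derive both estimates \emph{pathwise} --- for each realization of $(\omega_t)_{t\ge0}$ separately --- by applying the Finite Group Velocity Lemma (Lemma~\ref{lem:FSOP}); since the right-hand side $\e^{v|t-s|}$ of eq.~\eqref{eq:fsop} is deterministic, the pathwise bounds pass immediately to the average $\Ev{\cdot}$. The point is that eq.~\eqref{eq:MSE} has the form eq.~\eqref{eq:propagator} with $H(t;x,y)=h(x-y)+v_x(\omega_t)\delta_{x,y}$, so $\psi_t=U(t,0)\psi_0$, and the velocity $v$ in eq.~\eqref{eq:expbound} \emph{does not depend on the path}: the diagonal terms are excluded from the supremum there, so $v=\sum_{y\neq 0}\e^{m|y|}\abs{h(y)}$, which is finite by Assumption~\ref{ass:K}. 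One should also record at the outset that eq.~\eqref{eq:Amu} forces $\psi_0\in\ell^2$ with $\norm{\psi_0}_{\ell^2}\le c_{d,\mu}A_\mu$, so the evolution is well defined and, $H(t)$ being self-adjoint, unitary.

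For the first estimate I would write $\abs{\psi_t(x)}\le\sum_y\abs{U(t,0;x,y)}\abs{\psi_0(y)}\le A_\mu\sum_y\e^{-\mu|y|}\abs{U(t,0;x,y)}$ and distribute the weight using $|x|-|y|\le|x-y|$; since $\mu<m$, Lemma~\ref{lem:FSOP} (with $s=0$) then gives $\e^{\mu|x|}\abs{\psi_t(x)}\le A_\mu\sum_y\e^{m|x-y|}\abs{U(t,0;x,y)}\le A_\mu\e^{vt}$ pathwise, and applying $\Ev{\cdot}$ finishes this part.

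For eq.~\eqref{eq:expsum} I would first prove the pathwise weighted $\ell^1$ bound $\sum_x\e^{\lambda|x|}\abs{\psi_t(x)}\le A_\mu\e^{vt}\sum_{y}\e^{-(\mu-\lambda)|y|}$ for $\lambda<\mu$, by the same manipulation (bound $\abs{\psi_t(x)}$ by $\sum_y\abs{U(t,0;x,y)}\abs{\psi_0(y)}$, use $\e^{\lambda|x|}\le\e^{\lambda|x-y|}\e^{\lambda|y|}$, sum over $x$ by Lemma~\ref{lem:FSOP} since $\lambda<\mu<m$, and sum over $y$ using $\abs{\psi_0(y)}\le A_\mu\e^{-\mu|y|}$); a standard geometric-sum estimate bounds $\sum_y\e^{-(\mu-\lambda)|y|}$ by $C_d(\mu-\lambda)^{-d}$ as $\lambda\uparrow\mu$. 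Then I would invoke unitarity only in the crude form $\abs{\psi_t(x)}\le\norm{\psi_t}_{\ell^\infty}\le\norm{\psi_t}_{\ell^2}=\norm{\psi_0}_{\ell^2}\le c_{d,\mu}A_\mu$ to get $\abs{\psi_t(x)}^2\le c_{d,\mu}A_\mu\abs{\psi_t(x)}$, so that $\sum_x\e^{\lambda|x|}\abs{\psi_t(x)}^2\le c_{d,\mu}A_\mu\sum_x\e^{\lambda|x|}\abs{\psi_t(x)}\le C\e^{vt}(\mu-\lambda)^{-d}$ pathwise; Tonelli then gives eq.~\eqref{eq:expsum}. Finally, Lemma~\ref{lem:CF} follows: on a compact subset of $R_\mu$ one has $\abs{\e^{\im\vec{z}\cdot x}}\le\e^{\lambda|x|}$ for some $\lambda<\mu$, so eq.~\eqref{eq:expsum} provides a summable, uniformly bounded majorant for the series eq.~\eqref{eq:CF} of functions entire in $\vec{z}$, whence absolute and locally uniform convergence, hence analyticity, by the Weierstrass $M$-test.

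There is no deep obstacle here; the one point requiring care is obtaining the \emph{sharp} form of eq.~\eqref{eq:expsum} --- a single factor $\e^{vt}$ and the exponent $d$, rather than $\e^{2vt}$ and $d/2$, which is all one gets by naively squaring the first estimate or by a Cauchy--Schwarz split. This is exactly what dictates the decomposition $\abs{\psi_t}^2=\abs{\psi_t}\cdot\abs{\psi_t}$, with the weighted $\ell^1$-norm of one factor controlled via finite group velocity and the other factor controlled by the trivial unitarity bound. The only other thing to watch is to note explicitly that $v$ is insensitive to the diagonal disorder, since that is precisely what lets the pathwise estimates survive the average $\Ev{\cdot}$ with a disorder-independent constant.
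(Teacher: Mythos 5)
Your argument is correct and follows the same route as the paper: expand $\psi_t=U(t,0)\psi_0$, apply the Finite Group Velocity Lemma pathwise (noting $v$ is disorder-independent) to get the weighted sup and $\ell^1$ bounds, and then use the unitarity bound $\abs{\psi_t(x)}^2\le\norm{\psi_0}_{\ell^2}\abs{\psi_t(x)}$ to obtain eq.~\eqref{eq:expsum} with the single factor $\e^{vt}$ and exponent $d$. The only cosmetic difference is that you re-derive the weighted $\ell^1$ sum directly from the propagator rather than from the pointwise estimate $\abs{\psi_t(x)}\le A_\mu\e^{vt}\e^{-\mu|x|}$, but these yield the same bound.
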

\begin{proof}
Note that 
$$\psi_t(x) = \sum_y U(t,0;x,y) \psi_0(y)$$
where $U(t,s)$ is the unitary propagator for eq.~ \eqref{eq:MSE}. Thus 
$$\e^{\mu|x|}\abs{\psi_t(x) } \ \le \ A_\mu \sum_{y} \e^{\mu |x-y|} \abs{U(t,0;x,y)} \ \le \ A_\mu \e^{v t}$$
since $\mu \le m$.  Since $|\psi_t(x)|^2 \le \|\psi_0\|_{\ell^2} |\psi_t(x)|$ (by unitarity), eq.~\eqref{eq:expsum} follows. 

Eq.~ \eqref{eq:expsum} shows that the series defining $M(\vec{z};\psi_0)$ converges uniformly and absolutely on compact subsets of $R_\mu$.  Since the terms of the series are analytic, the results claimed in Lemma \ref{lem:CF} follow from standard complex analysis.
\end{proof}

\section{Feynman-Kac-Pillet Formula}\label{sec:Feynman-Kac}
The key to the proof of diffusion for the CF (Theorem \ref{thm:CF}) lies in the following ``Feynman-Kac-Pillet formula'' 
\begin{equation}\label{eq:Kangetal}
M_t(\vec{z};\psi_0) = \ipc{\delta_0 \otimes 1}{\e^{-t \widehat{L}_{\vec{z}}} \widehat{\rho}_{\vec{z}}\otimes 1}_{\Hi}, \quad \vec{z}\in R_\mu,
\end{equation}
linking the characteristic function to a matrix element of a contraction semi-group on the ``augmented'' Hilbert space
\begin{equation}
\Hi \ := \ L^2(\Z^d\times \Omega) 
\end{equation}
with the product of counting measure on $\Z^d$ and $dP(\omega)$ on $\Omega$.  We use $\psi\otimes f$ to denote the product function
$$\psi\otimes f (x,\omega) \ = \ \psi(x) f(\omega).$$
We denote the constant function on $\Omega$ taking value $1$ by $1$, so that $\psi\otimes 1(x,\omega)=\psi(x)$. In eq.~\eqref{eq:Kangetal} the generator of the semigroup is
\begin{equation}\label{eq:L}
\widehat{L}_{\vec{z}}  \ = \ \im \widehat{K}_{\vec{z}} + \im \widehat{V} + B
\end{equation}
with ``hopping term''
$$ \widehat{K}_{\vec{z}}\phi(x,\omega)= \sum_{\zeta \in \Z^d} h(\zeta) \left [ \phi(x-\zeta,\omega) - \e^{-\im \vec{z}\cdot \zeta}\phi(x-\zeta,\sigma_\zeta \omega) \right ],$$
``potential term''
$$\widehat{V}\phi(x,\omega) = \left ( v_x(\omega) - v_0(\omega) \right ) \phi(x,\omega),$$ 
and ``dissipative term'' $B$ given by the Markov generator acting on product functions as $B(\psi\otimes f)=\psi\otimes (Bf)$.
The initial state is given by $\widehat{\rho}_{\vec{z}} \otimes 1$, with 
$$\widehat{\rho}_{\vec{z}}(x) = \sum_{y\in \Z^d} \e^{\im \vec{z}\cdot y} \psi_0(x+y) \overline{\psi_0(y)}.$$ 
If $\psi_0$ satisfies the exponential bound eq.\ \eqref{eq:Amu}, then $\vec{z} \mapsto \widehat{\rho}_{\vec{z}}$ is an analytic map from $R_\mu$ into $\ell^1(\Z^d)\subset \ell^2(\Z^d)$.  

For $\vec{z}\in \mathbb{R}^d$, eq.~\eqref{eq:Kangetal} is \cite[Eq. 3.18]{Kang2009b}, which was proved in \cite{Kang2009b} by making use of a Feynman-Kac type formula due to Pillet \cite{Pillet1985} for the mean density matrix $\Ev{\psi_t(x)\overline{\psi_t(y)}}$. For $\vec{z} \in R_\mu \setminus \R^d$,  eq.~\eqref{eq:Kangetal} essentially follows  by analytic continuation. 

To make the analytic continuation argument precise, it useful to recall that, in the terminology of Kato \cite{Kato1995}, an \emph{analytic family of type (B)}  is a family $T(z)$ of closed maximally sectorial operators depending on a complex parameter $z$ such that the corresponding family of closed sectorial forms,
$$\mathfrak{t}(z)[u] \ = \ \ \text{form closure of }\ipc{u}{T(z)u},$$
has a common domain $\mathcal{Q}$ and is analytic, in the sense that $\mathfrak{t}(z)[u]$ is an analytic function of $z$ for each $u\in \mathcal{Q}$. An analytic family of bounded operators is an analytic family of type (B) for which the operators $T(z)$ are all bounded  and thus the common form domain is the entire Hilbert space $\mathcal{H}$.  These notions extend directly to functions of several complex variables.
\begin{lem} As $\vec{z}$ ranges over $R_\mu$,
\begin{enumerate}
\item $\widehat{K}_{\vec{z}}$ is an analytic family of bounded operators,
\item $\widehat{L}_{\vec{z}}$ is an analytic family of type (B) with common form domain equal to the form domain $\mathcal{Q}(B)$ of $B$, and
\item the identity eq.~\eqref{eq:Kangetal} holds whenever $\phi_0$ satisfies eq.\ \eqref{eq:Amu}.
\end{enumerate}
\end{lem}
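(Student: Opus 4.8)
The plan is to establish the three assertions in order, viewing $\widehat{K}_{\vec{z}}$ and $\widehat{V}$ as a bounded, analytically varying perturbation of the (lifted) Markov generator $B$, so that the classical theory of analytic families of type (B) and of bounded perturbations of semigroups applies. For the first claim I would write $\widehat{K}_{\vec{z}} = K\otimes I - \widehat{K}^{(2)}_{\vec{z}}$ with $\widehat{K}^{(2)}_{\vec{z}}\phi(x,\omega) = \sum_{\zeta} h(\zeta)\,\e^{-\im\vec{z}\cdot\zeta}\,\phi(x-\zeta,\sigma_\zeta\omega)$, and note that for each $\zeta$ the operator $\phi(x,\omega)\mapsto\phi(x-\zeta,\sigma_\zeta\omega)$ is unitary on $\Hi$, being a lattice shift composed with the Koopman operator of the invertible $\mu$-preserving map $\sigma_\zeta$. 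Thus the $\zeta$-th summand of $\widehat{K}^{(2)}_{\vec{z}}$ is an entire operator-valued function of $\vec{z}$ of operator norm at most $\abs{h(\zeta)}\,\e^{(\Im\vec{z})\cdot\zeta}$; for $\vec{z}$ in a compact subset of $R_\mu$ these norms are summable in $\zeta$ by \eqref{eq:hassume} (recall $\mu<m$), so the series converges in operator norm uniformly on compact subsets of $R_\mu$, and a locally uniform limit of analytic operator-valued functions is analytic (test against vectors and apply Morera). Hence $\widehat{K}_{\vec{z}}$ is an analytic family of bounded operators.

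For the second claim, the sectoriality bound \eqref{eq:sector} makes $B$, hence $\widehat{B}:=I\otimes B$ on $\Hi$, $m$-sectorial, with an associated closed sectorial form $\mathfrak{b}$ of domain $\mathcal{Q}(B)$ (abusing notation for the tensored space). Since $v_0\in L^\infty(\mu)$ and, by the first claim, $\im\widehat{K}_{\vec{z}}$ is bounded, the form $\mathfrak{l}_{\vec{z}}[u] := \mathfrak{b}[u] + \im\ipc{u}{\widehat{K}_{\vec{z}}u} + \im\ipc{u}{\widehat{V} u}$ on $\mathcal{Q}(B)$ is a bounded perturbation of $\mathfrak{b}$, hence closed and sectorial with the same form domain, and it represents the operator $\widehat{L}_{\vec{z}}$ of \eqref{eq:L}. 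For fixed $u\in\mathcal{Q}(B)$ only the middle term depends on $\vec{z}$, and it does so analytically by the first claim; this is precisely the statement that $\widehat{L}_{\vec{z}}$ is an analytic family of type (B) with common form domain $\mathcal{Q}(B)$.

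For the third claim I would argue by analytic continuation from $\R^d$, where \eqref{eq:Kangetal} is \cite[Eq.~3.18]{Kang2009b}. It suffices to show both sides are analytic functions of $\vec{z}$ on $R_\mu$, for then $F(\vec{z}) := M_t(\vec{z};\psi_0) - \ipc{\delta_0\otimes 1}{\e^{-t\widehat{L}_{\vec{z}}}\widehat{\rho}_{\vec{z}}\otimes 1}$ is analytic on $R_\mu$ and vanishes on $\R^d$, and freezing all but one complex coordinate at a time (first at real values, then at complex ones) and invoking the one-variable identity theorem repeatedly forces $F\equiv 0$ throughout $R_\mu$. Analyticity of the left side is Lemma~\ref{lem:CF}. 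For the right side, $\vec{z}\mapsto\widehat{\rho}_{\vec{z}}$ is analytic into $\ell^1(\Z^d)\subset\ell^2(\Z^d)$ by the exponential bound on $\psi_0$ (as recorded after \eqref{eq:L}) and $\delta_0\otimes 1$ is constant, so it remains to see that $\vec{z}\mapsto\e^{-t\widehat{L}_{\vec{z}}}$ is analytic into bounded operators on $\Hi$. Writing $\widehat{L}_{\vec{z}} = \widehat{B} + R_{\vec{z}}$ with $R_{\vec{z}} := \im\widehat{K}_{\vec{z}} + \im\widehat{V}$ bounded and locally norm-bounded on $R_\mu$, and with $\widehat{B}$ the generator of the contraction semigroup $I\otimes S_t^\dagger$, the Dyson--Duhamel expansion of $\e^{-t\widehat{L}_{\vec{z}}}$ in powers of $R_{\vec{z}}$ converges in operator norm uniformly on compact subsets of $R_\mu$ and has each term polynomial, hence analytic, in the analytic family $R_{\vec{z}}$; therefore the sum is analytic. (Alternatively one cites that an analytic family of type (B) generates a holomorphic family of semigroups, cf.~\cite{Kato1995}.) Pairing these analytic operator- and vector-valued maps then yields analyticity of the right side.

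The step requiring the most care is the last one: $\R^d$ is a totally real slice of $\C^d$, not open, so equality there does not by itself give equality on $R_\mu$, and to run the coordinatewise identity-theorem argument one genuinely needs both sides to be jointly analytic on all of $R_\mu$ — in particular that $\e^{-t\widehat{L}_{\vec{z}}}$ is a well-defined, norm-bounded, analytic family of bounded operators even for those $\vec{z}$ where $\widehat{L}_{\vec{z}}$ is no longer accretive and the semigroup is no longer a contraction. Securing that, via the Dyson--Duhamel series or the type-(B) semigroup theory, together with the coordinatewise identity-theorem argument, is where the real content lies; the remaining verifications are routine.
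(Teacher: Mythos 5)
Your proposal is correct and follows essentially the same route as the paper: boundedness of $\widehat{K}_{\vec{z}}$ and $\widehat{V}$ via the exponential bound on $h$, identification of $\widehat{L}_{\vec{z}}$ as an analytic family of type (B) with form domain $\mathcal{Q}(B)$, and analytic continuation of the Feynman--Kac--Pillet identity from $\R^d$ by coordinatewise application of the identity theorem, using holomorphy of the semigroup $\e^{-t\widehat{L}_{\vec{z}}}$ (the paper cites \cite[Theorem 2.6]{Kato1995}, while you also offer the equivalent Dyson--Duhamel argument). Your write-up merely fills in details the paper leaves implicit.
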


\begin{proof}
Boundedness of $\widehat{K}_\vec{z}$ for $\vec{z}\in R_\mu$ follows directly from the exponential bound on the hopping terms.  Explicitly,
\begin{equation}\label{eq:Kbound}\norm{\widehat{K}_{\vec{z}}} \ \le \ \sum_{\zeta} \e^{|\Im \vec{z}| \zeta } \abs{h(\zeta)}.
\end{equation}
The operator $\widehat{V}$ is also bounded, $\|\widehat{V}\| \le 2 \|v_0\|_{L^\infty}$. Thus the expression eq.~\eqref{eq:L} defines a maximally sectorial operator with form domain $\mathcal{Q}(B)$.   Analyticity of $\widehat{K}_{\vec{z}}$ and $\widehat{L}_{\vec{z}}$ are evident from the explicit dependence on $\vec{z}$.   Since, as remarked above, eq.~\eqref{eq:Kangetal} holds for $\vec{z}\in \R^d$ and the semigroup $\e^{-t \widehat{L}_{\vec{z}}}$ is holomorphic in $\vec{z}$ (see \cite[Theorem 2.6]{Kato1995}), eq.~\eqref{eq:Kangetal} holds for all $\vec{z}\in R_\mu$ by $d$-applications of the identity theorem, one for analytic continuation in each of the variables $z_1,\ldots,z_d$. \end{proof}

\section{Spectral theory of the generators}\label{sec:spectral}
Our proof of Theorem \ref{thm:CF} relies on a spectral analysis of $\widehat{L}_{\vec{z}}$ for $\vec{z}$ in a neighborhood of $\vec{0}$.  Before proceeding, it will be useful to make some general comments about these operators.  

First note that, as the potential term $\widehat{V}$ is self-adjoint, its contribution to the operator $\widehat{L}_{\vec{z}}$ is anti-hermitian.  When $\vec{z}\in \R^d$,  $\widehat{K}_{\vec{z}}$ is also self-adjoint. We conclude that $\widehat{L}_{\vec{z}}$ is maximally accretive and, by the Lumer-Phillips Theorem, the semi-group $\e^{-t\widehat{L}_{\vec{z}}}$ is contractive. However, for $\vec{z} \not \in \R^d$, the operator $\widehat{K}_{\vec{z}}$ is not self-adjoint, $\widehat{L}_{\vec{z}}$ is not accretive and the semi-group $\e^{-t\widehat{L}_{\vec{z}}}$ is not contractive. Instead $\widehat{K}_{\vec{z}}^\dagger = \widehat{K}_{\vec{z}^*},$
and 
$$\mathrm{Im} \widehat{K}_{\vec{z}}\phi(z,\omega) \ := \ \frac{1}{2\im} \left [ \widehat{K}_{\vec{z}}-\widehat{K}_{\vec{z}}^\dagger\right ] \phi(x,\omega) \  = \  \im  \sum_{\zeta \in \Z^d} h(\zeta) \e^{-\im \mathrm{Re}\vec{x} \cdot \zeta} \sinh( \vec{y} \cdot \zeta) \phi(x-\zeta,\sigma_\zeta\omega),$$
where $\vec{z}=\vec{x}+\im \vec{y}$, $\vec{x}, \vec{y} \in \R^d$.
In particular,
$$\norm{\Im \widehat{K}_{\vec{z}}} \ \le \ \sum_{\zeta \in \Z^d} \abs{h(\zeta)} \abs{\sinh{ \vec{y} \cdot \zeta}} \ =: \ \alpha(\vec{y}),$$
and 
\begin{equation}\label{eq:numericalrange}\mathrm{Re} \ipc{\phi} {\widehat{L}_{\vec{z}} \phi}_{\Hi} \ \ge \ -\alpha(\vec{y}), \qquad \phi \in \mathcal{Q}(B).
\end{equation}
Thus the semi-group $\e^{-t\widehat{L}_{\vec{z}}}$ is exponentially bounded by
\begin{equation}\label{eq:normbound}
\norm{\e^{-t\widehat{L}_{\vec{z}}}} \ \le \ \e^{t\alpha(\vec{y})}, \qquad \vec{z}=\vec{x}+\im \vec{y}.
\end{equation}
Since $\alpha(\vec{y}) = \mathcal{O}(\abs{\vec{y}})$ as $|\vec{y}|\rightarrow 0$,  the rate of growth is at least slow for $\vec{z}$ near $\R^d$.

In essence the proof of Theorem \ref{thm:CF} hinges on showing that, while eq.~\eqref{eq:numericalrange} is sharp, eq.~\eqref{eq:normbound} is not, and in fact
\begin{equation}\label{eq:asymptotic}\e^{-t\widehat{L}_{\vec{z}}} \ = \ \e^{t\lambda(\vec{y})} Q(\vec{y}) + \mathcal{O}(\e^{-\gamma t}), \qquad \vec{z}=\vec{x}+\im \vec{y}, 
\end{equation}
where $Q(\vec{y})$ is a rank one operator, $\lambda(\vec{y})= \mathcal{O}(|y|^2)$ and $\gamma >0$. 

To obtain  eq.\ \eqref{eq:asymptotic} we need to examine the operators $\widehat{L}_{\vec{z}}$, in particular $\widehat{L}_{\vec{0}}$, more closely.  For this purpose, we introduce a decomposition of the Hilbert space $\Hi$:
$$\Hi \ = \ \Hi_0 \oplus \Hi_1, \quad \Hi_0 = \ell^2(\Z^d) \otimes 1, \quad \Hi_1 = \ell^2(\Z^d) \otimes L^2_0(\Omega).$$ 
Correspondingly let $P_0$, $P_1$ denote orthogonal projection onto $\Hi_0$, $\Hi_1$ respectively, i.e.,
$$P_0f(x) \ = \ \int_\Omega f(x,\omega) d\mu(\omega), \quad P_1 f(x,\omega) \ = \ f(x,\omega) - P_0f(x).$$
The subspaces $\Hi_0$ and $\Hi_1$ are invariant under $B$ and $\widehat{K}_{\vec{z}}$.  Indeed,
$$P_0 B \ = \ B P_0 \ = \ 0 , \quad \text{ and } \quad B \ = \ P_1 B \ = \ B P_1 \ \ge \ \frac{1}{T} P_1 ,$$
while
\begin{equation}\label{eq:Kz0}
P_0 \widehat{K}_{\vec{z}}f(x) \ = \ \widehat{K}_{\vec{z}} P_0f(x) \ = \ \sum_{\zeta\in \Z^d}h(\zeta) \left [ 1-\e^{-\im \vec{z}\cdot \zeta} \right ] P_0f(x-\zeta).
\end{equation}
On the other hand, the subspaces $\Hi_0$ and $\Hi_1$ are not invariant under the potential $\widehat{V}$. In fact, the non-degeneracy condition eq.\ \eqref{eq:vnondeg} implies that,
\begin{equation}\label{eq:vnondeg2}
  \norm{B^{-1} \widehat{V}  P_0 f\ }^2 \ = \  \sum_{x\neq 0}  \norm{B^{-1}(v_x - v_0)}^2 \abs{P_0f(x)}^2 \ \ge\  \chi \sum_{x\neq 0} \abs{P_0f(x)}^2,
\end{equation}
so $P_1 \widehat{V} P_0 \neq 0$. However, because $\Ev{v_x-v_0}=0$, 
$$P_0 \widehat{V} P_0 \ = \ 0.$$

To summarize, with respect to the decomposition $\Hi = \Hi_1 \oplus \Hi_2$, the operator $\widehat{L}_{\vec{z}}$  has the following block matrix form
\begin{equation}\label{eq:blockmatrix}
\widehat{L}_{\vec{z}} \ = \ \begin{pmatrix} \im \widehat{K}_{\vec{z}}^{00} & \im \widehat{V}^{01} \\
\im \widehat{V}^{10} &  \im \left (\widehat{K}_{\vec{z}}^{11} + \widehat{V}^{11} \right ) + B
\end{pmatrix}
\end{equation}
where $A^{ij} = P_i A P_j$ for an operator $A$ and $i,j\in \{0,1\}$. This block matrix form allows us to prove the following Lemma about the spectrum of $\widehat{L}_{\vec{0}}$.
\begin{figure}																
\[																		
\begin{tikzpicture}[scale=1]									
\fill [gray!40 ] (8.5,2.5) --(3,1.5) -- (3,-1.5) -- (8.5,-2.5);
\fill [pattern color = blue, semitransparent, pattern=horizontal lines] (8.5,2.75) --(2.75,1.75) -- (2.75,-1.75) -- (8.5,-2.75);
\draw[thick] (8.5,2.5) --(3,1.5) -- (3,-1.5) -- (8.5,-2.5);	
\draw[blue,thick] (8.5,2.75) --(2.75,1.75) -- (2.75,-1.75) -- (8.5,-2.75);			
\draw[style=very thick, ->] (0,-3)--(0,3) node[above]{{$\Im w$}};  
\draw[style=very thick, ->] (-1.75,0)--(9,0) node[right]{$\Re w$};	
\draw[style = dashed] (3,3) -- (3,-3) ;
\draw[<-](3.1,-2.8) -- (3.6, -2.8) node[right]{$\Re z = \frac{\gamma}{T}$};
\draw[style = dashed] (2.75,-3) -- (2.75,3);
\draw[<-](2.65,-2.8) -- (2.05,-2.8) node[left]{$\Re z = \frac{\gamma'}{T}$};																										
\draw[<-] (7.5,.5) -- (8.7,1)node[right, yshift=.125cm]{$\Sigma_+^{(\vec{0})}$};		
\draw[blue,<-] (7.5,2.45) -- (8.7,3)node[right, yshift=.125cm]{$\Sigma_+^{(\vec{z})}$};
					
\fill [color = black] (0, 0) circle (2pt);											
\draw [ <-]  (-.1, .05) --  (-1.1,.6) node[above, xshift=-.25cm] {$E\left(\vec{0}\right)=0$};	

\fill [color = blue] (0.5, -0.25) circle (2pt);
\draw [color = blue, <-]  (0.4, -0.3) --  (-1.1,-0.85) node[below, xshift=-.25cm] {$E\left(\vec{z}\right)$};												
\end{tikzpicture}											
\]	
\caption{\underline{Location of the spectrum of $\widehat{L}_{\vec{z}}$}. The operator has a non-degenerate eigenvalue  $E(\vec{z})$, while the rest of the spectrum is contained in the set $\Sigma_+^{(\vec{z})}$ shown in blue stripes. When $\vec{z}=0$ the eigenvalue $E(\vec{0})=0$.\label{fig:L0}}	
\end{figure}
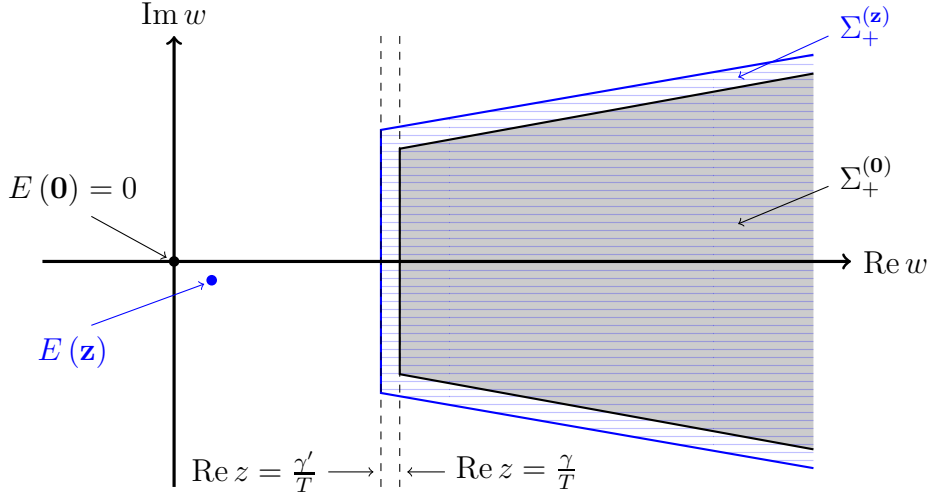

\begin{lem}\label{lem:L0} The operator $\widehat{L}_{\vec{0}}$ satisfies
\begin{enumerate}
\item $\widehat{L}_{\vec{0}} \delta_0 \otimes 1 \ = \ \widehat{L}_{\vec{0}}^\dagger \delta_0 \otimes 1 \ = \  0.$
\item $\widetilde{\Hi} = \{\delta_0\otimes 1\}^\perp$ is invariant under $\widehat{L}_{\vec{0}}$ and the spectrum of the operator restricted to this subspace is contained in a set $\Sigma_+^{(\vec{0})}$ of the form
\begin{equation} \setb{ w\in \C }{ \Re w \ge \frac{\gamma}{T}, \ \abs{\Im w} \le p \left [ \Re w + c \right ] }\label{eq:Sigma+}
\end{equation}
with $p,c >0$ and $0 <\gamma <1$.
\item There is $C <\infty$ so that if $w \not \in  \{0\} \cup\Sigma_+^{(\vec{0})}$, then
$$ \norm{\left (\widehat{L}_{\vec{0}} - w I\right )^{-1}} \ \le \ \frac{C}{\dist\left (w,\{0\}\cup \Sigma_+^{(\vec{0})} \right ) }.$$
\end{enumerate}
\end{lem}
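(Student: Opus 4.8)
The three parts are treated in order, the third following from the first two.

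\textbf{Part~(1).} This is a direct computation. At $\vec 0$ the factor $\e^{-\im\vec z\cdot\zeta}$ is $1$, so the two terms in $\widehat K_{\vec 0}$ cancel on $\delta_0\otimes 1$; the potential term $\widehat V(\delta_0\otimes 1)(x,\omega)=(v_x(\omega)-v_0(\omega))\delta_{x,0}$ vanishes because $v_0-v_0=0$; and $B(\delta_0\otimes 1)=\delta_0\otimes(B1)=0$. Hence $\widehat L_{\vec 0}(\delta_0\otimes 1)=0$. The same computation, using that $\widehat K_{\vec 0}$ and $\widehat V$ are self-adjoint and $B^\dagger 1=0$, gives $\widehat L_{\vec 0}^\dagger(\delta_0\otimes 1)=0$.

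\textbf{Part~(2).} Invariance of $\widetilde{\Hi}=\{\delta_0\otimes 1\}^\perp$ is immediate from part~(1): for $\phi\in\widetilde{\Hi}$ in the form domain, $\ipc{\delta_0\otimes 1}{\widehat L_{\vec 0}\phi}=\ipc{\widehat L_{\vec 0}^\dagger(\delta_0\otimes 1)}{\phi}=0$. For the spectral inclusion I would refine the decomposition $\Hi=\Hi_0\oplus\Hi_1$ of \S\ref{sec:spectral} by peeling the $\delta_0$ direction off $\Hi_0$: $\widetilde{\Hi}=\Hi_0'\oplus\Hi_1$ with $\Hi_0'=\{\psi\otimes 1:\psi(0)=0\}$. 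Using $\widehat K_{\vec 0}^{00}=0$, $\widehat V^{00}=0$ and $\widehat V^{10}\delta_0=0$, the block form \eqref{eq:blockmatrix} restricts to
$$\widehat L_{\vec 0}\big|_{\widetilde{\Hi}}=\begin{pmatrix} 0 & \im\widehat V^{01}\\ \im\widehat V^{10} & M\end{pmatrix}\ \ \text{on}\ \ \Hi_0'\oplus\Hi_1,\qquad M:=\im\big(\widehat K_{\vec 0}^{11}+\widehat V^{11}\big)+B,$$
with $M$ maximally sectorial, $\Re\ipc{\phi}{M\phi}=\ipc{\phi}{B\phi}\ge\tfrac1T\norm{\phi}^2$, and a sector bound inherited from \eqref{eq:sector} since $\widehat K_{\vec 0}^{11}+\widehat V^{11}$ is bounded self-adjoint. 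I would then run the Schur complement (Feshbach) reduction in the $\Hi_1$-block: for $w$ outside a fixed proper sector with $\Re w$ below a threshold of order $1/T$, $M-w$ is invertible with $\norm{(M-w)^{-1}}$ controlled by the reciprocal of the distance from $w$ to the numerical range of $M$, so $\widehat L_{\vec 0}\big|_{\widetilde{\Hi}}-w$ is invertible precisely when the effective operator $E(w):=-w+\widehat V^{01}(M-w)^{-1}\widehat V^{10}$ is invertible on $\Hi_0'$. Since $\widehat V^{01}=(\widehat V^{10})^\dagger$, $\Re\ipc{f}{E(w)f}=-\Re w\,\norm{f}^2+\Re\ipc{\widehat V^{10}f}{(M-w)^{-1}\widehat V^{10}f}$; a short manipulation using $\Re\ipc{h}{Mh}=\ipc{h}{Bh}\ge\tfrac1T\norm{h}^2$ and $\norm{B^{-1}}\le T$ on $\Hi_1$ gives $\Re\ipc{g}{M^{-1}g}\ge c\norm{B^{-1}g}^2$, so feeding $g=\widehat V^{10}f$ into the non-degeneracy bound \eqref{eq:vnondeg2} (which reads $\norm{B^{-1}\widehat V^{10}f}^2\ge\chi\norm{f}^2$ on $\Hi_0'$) gives $\Re\ipc{f}{E(w)f}\ge(c\chi-\mathcal{O}(|w|))\norm{f}^2$. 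Choosing $0<\gamma<1$ small makes this strictly positive for $\Re w\le\gamma/T$, so $E(w)$ — hence $\widehat L_{\vec 0}\big|_{\widetilde{\Hi}}-w$ — is invertible there; carrying $\Im\ipc{f}{E(w)f}$ along via the sectoriality of $(M-w)^{-1}$ yields the two-sided estimate $|\Im w|\le p[\Re w+c]$, i.e.\ the shape \eqref{eq:Sigma+}. This passage — converting the $B^{-1}$-form of the non-degeneracy hypothesis into a quantitative positive lower bound on $\Re E(w)$ while keeping $\Im E(w)$ sectorial, uniformly over the region — is the main obstacle, and it is exactly what the two abstract operator lemmas of Appendix~\ref{sec:abstract} are tailored to deliver; in the proof I would quote those lemmas after checking their hypotheses against the data above.

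\textbf{Part~(3).} By part~(1), $0$ is an eigenvalue of $\widehat L_{\vec 0}$ with eigenvector and adjoint eigenvector both equal to $\delta_0\otimes 1$; by part~(2) it is isolated, and since $\ipc{\delta_0\otimes 1}{\delta_0\otimes 1}\neq 0$ it is algebraically simple (a generalized eigenvector $\phi$ would force $0\neq\norm{\delta_0\otimes 1}^2=\ipc{\delta_0\otimes 1}{\widehat L_{\vec 0}\phi}=\ipc{\widehat L_{\vec 0}^\dagger(\delta_0\otimes 1)}{\phi}=0$). Thus the Riesz projection for $\{0\}$ is the rank-one orthogonal projection $Q=\ket{\delta_0\otimes 1}\bra{\delta_0\otimes 1}$, the complementary spectral subspace is $\widetilde{\Hi}$, and
$$\big(\widehat L_{\vec 0}-wI\big)^{-1}=-\tfrac1w\,Q\ \oplus\ \big(\widehat L_{\vec 0}\big|_{\widetilde{\Hi}}-wI\big)^{-1}$$
relative to the orthogonal splitting $\Hi=\ran Q\oplus\widetilde{\Hi}$. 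The first summand has norm $1/|w|=1/\dist(w,\{0\})$, and the second has norm $\le C/\dist(w,\Sigma_+^{(\vec 0)})$ by the resolvent estimate for $\widehat L_{\vec 0}\big|_{\widetilde{\Hi}}$ that drops out of the Schur-complement construction in part~(2) (bounds on $(M-w)^{-1}$ and $E(w)^{-1}$ off $\Sigma_+^{(\vec 0)}$). Since the splitting is orthogonal, the resolvent norm is the larger of the two, hence at most $\max(C,1)/\dist(w,\{0\}\cup\Sigma_+^{(\vec 0)})$, as claimed.
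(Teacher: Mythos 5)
Your proposal takes essentially the same route as the paper: exploit the vanishing of $\widehat{V}^{10}\delta_0\otimes 1$ to get part (1) and the invariance in part (2), then restrict to $\widetilde{\Hi}$, pass to the refined block decomposition $\widetilde{\Hi}_0'\oplus\Hi_1$, and run the Schur-complement/accretivity argument via the two abstract lemmas of Appendix~\ref{sec:abstract}, feeding the non-degeneracy bound eq.~\eqref{eq:vnondeg2} into the hypothesis eq.~\eqref{eq:lowerbound}. The one place you improve on the paper's exposition is part (3), where your explicit observation that $\delta_0\otimes1$ is simultaneously eigenvector and adjoint eigenvector — hence the Riesz projection for $\{0\}$ is orthogonal and the resolvent splits as a direct sum — spells out how the rank-one piece and the restricted resolvent bound combine, which the paper leaves as a one-line appeal to eq.~\eqref{eq:normboundresolvent} together with the numerical-range estimate.
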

\begin{rems*}1) See figure \ref{fig:L0}. 2)
The first identity is a closely related to the unitarity of the evolution, since
$$\norm{\psi_0}^2 \ = \ \Ev{ \norm{\psi_t}^2} \ = \ \ipc{\delta_0\otimes 1}{\e^{-t \widehat{L}_{\vec{0}}} \widehat{\rho}_{0}}_{\Hi}.$$
However, it may also be proved directly as below.
\end{rems*}

Lemma \ref{lem:L0} (without the estimate on the resolvent) is essentially \cite[Lemma 3]{Kang2009b}.  For completeness, we prove it below in \S\ref{sec:abstract}.  Simple perturbation theory now implies that the same qualitative picture holds for $\widehat{L}_{\vec{z}}$  once $\vec{z}$ is sufficiently small:
\begin{lem}\label{lem:Lz} There is $\epsilon >0$ so that if $|\vec{z}| < \epsilon$, then the spectrum of $\widehat{L}_{\vec{z}}$ may be decomposed as
\begin{equation}\label{eq:specdecomp}
\sigma(\widehat{L}_{\vec{z}}) \ = \ \{E(\vec{z})\} \cup \sigma_+^{(\vec{z})}
\end{equation}
where
\begin{enumerate}
\item $E(\vec{z})$ is a non-degenerate eigenvalue that depends analytically on $\vec{z}$ and is located in a disk of radius $\nicefrac{\gamma}{4T}$ around $0$.
\item $\sigma_+^{\vec{z}}$  is a subset  of a set $\Sigma_+^{(\vec{z})}$ of the form eq.\ \eqref{eq:Sigma+} with modified values of the constants,
\begin{equation}\label{eq:Sigma+2}\sigma_+^{\vec{z}} \ \subset \ \Sigma_+^{(\vec{z})} \ := \ \setb{ w\in \C }{ \Re w \ge \frac{\gamma'}{T}, \ \abs{\Im z} \le p' \left [ \Re w + c' \right ] }
\end{equation}
where $$\gamma' \ = \ \gamma - \mathcal{O}(|\vec{z}|), \quad p'=p, \quad \text{ and }c'= c + \mathcal{O}(|\vec{z}|),$$
with $\gamma,p,c$ as in eq.\ \eqref{eq:Sigma+}.
\end{enumerate}
Furthermore, there is $C <\infty$ such that for $|\vec{z}|<\epsilon$ and $w\not \in \{E(\vec{z}) \} \cup \Sigma_+^{(\vec{z})}$,
\begin{equation}\label{eq:resolventbound2} \norm{(\widehat{L}_{\vec{z}} - w I )^{-1}} \ \le \ \frac{C}{\dist\left (w,\{E(\vec{z})\}\cup \Sigma_+^{(\vec{z})} \right ) }.
\end{equation}
\end{lem}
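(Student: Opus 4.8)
The plan is to treat this as a routine exercise in analytic perturbation theory (standard Kato theory, cf.\ \cite{Kato1995}), taking Lemma~\ref{lem:L0} as the unperturbed input. Write $\widehat{L}_{\vec{z}} = \widehat{L}_{\vec{0}} + \im W_{\vec{z}}$ with $W_{\vec{z}} := \widehat{K}_{\vec{z}} - \widehat{K}_{\vec{0}}$, so that $W_{\vec{z}}\phi(x,\omega) = \sum_{\zeta}h(\zeta)\bigl(1-\e^{-\im\vec{z}\cdot\zeta}\bigr)\phi(x-\zeta,\sigma_\zeta\omega)$ is a bounded operator, analytic in $\vec{z}$, with $\norm{W_{\vec{z}}} \le \sum_\zeta |h(\zeta)|\,\bigl|1-\e^{-\im\vec{z}\cdot\zeta}\bigr| = \mathcal{O}(|\vec{z}|)$ for $|\vec{z}|$ small, by Assumption~\ref{ass:K}. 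From the second-resolvent identity $(\widehat{L}_{\vec{z}}-wI)^{-1} = (\widehat{L}_{\vec{0}}-wI)^{-1}\bigl(I + \im W_{\vec{z}}(\widehat{L}_{\vec{0}}-wI)^{-1}\bigr)^{-1}$ together with the bound in Lemma~\ref{lem:L0}(3), the Neumann series converges whenever $\dist\bigl(w,\{0\}\cup\Sigma_+^{(\vec{0})}\bigr) \ge \kappa := 2C\norm{W_{\vec{z}}}$, giving $w\in\rho(\widehat{L}_{\vec{z}})$ and $\norm{(\widehat{L}_{\vec{z}}-wI)^{-1}} \le 2C/\dist(w,\{0\}\cup\Sigma_+^{(\vec{0})})$ there. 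In particular $\sigma(\widehat{L}_{\vec{z}})$ lies in the closed $\kappa$-neighborhood of $\{0\}\cup\Sigma_+^{(\vec{0})}$, where $\kappa = \mathcal{O}(|\vec{z}|)$.

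Next I would record two elementary geometric facts about this neighborhood: the $\kappa$-neighborhood of $\{0\}$ is the disk $\{|w|\le\kappa\}$, which for $|\vec{z}|<\epsilon$ lies strictly inside the circle $\Gamma := \{|w| = \gamma/(4T)\}$; and the $\kappa$-neighborhood of the sectorial set $\Sigma_+^{(\vec{0})}$ of the form eq.~\eqref{eq:Sigma+} is contained in a set $\Sigma_+^{(\vec{z})}$ of the same form with $p'=p$, $\gamma' = \gamma - \mathcal{O}(|\vec{z}|)$, $c' = c + \mathcal{O}(|\vec{z}|)$ (just $\Re w \ge \gamma/T - \kappa$ and $|\Im w| \le p(\Re w + c) + (1+p^{-1})\kappa$). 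Shrinking $\epsilon$ so that $\gamma'/T > \gamma/(4T)$, the set $\Sigma_+^{(\vec{z})}$ is disjoint from the closed disk bounded by $\Gamma$, whence $\sigma(\widehat{L}_{\vec{z}}) \subset \{|w|\le\kappa\}\cup\Sigma_+^{(\vec{z})}$ with only the first piece inside $\Gamma$. Form the Riesz projection $P(\vec{z}) = \frac{1}{2\pi\im}\oint_\Gamma (wI - \widehat{L}_{\vec{z}})^{-1}\,\di w$. On $\Gamma$ one has $\dist(w,\{0\}\cup\Sigma_+^{(\vec{0})}) = \gamma/(4T) > \kappa$ for $|\vec{z}|<\epsilon$, so the integrand is well defined, uniformly bounded, and jointly analytic in $(w,\vec{z})$; hence $P(\vec{z})$ is analytic in $\vec{z}$ and $\norm{P(\vec{z})-P(\vec{0})}\to 0$ as $\vec{z}\to\vec{0}$. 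By Lemma~\ref{lem:L0}(1)--(2), $P(\vec{0})$ is the rank-one orthogonal projection onto $\C\,\delta_0\otimes 1$; since projections within norm distance $1$ have equal rank, $P(\vec{z})$ has rank one for $|\vec{z}|<\epsilon$. Consequently $\widehat{L}_{\vec{z}}$ has, inside $\Gamma$, a single algebraically simple eigenvalue $E(\vec{z})$ with $|E(\vec{z})|\le\kappa<\gamma/(4T)$, and $E(\vec{z}) = \tr\bigl(\widehat{L}_{\vec{z}}P(\vec{z})\bigr) = \frac{1}{2\pi\im}\tr\oint_\Gamma w(wI-\widehat{L}_{\vec{z}})^{-1}\,\di w$ is analytic in $\vec{z}$. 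Setting $\sigma_+^{(\vec{z})} := \sigma(\widehat{L}_{\vec{z}})\setminus\{E(\vec{z})\}$, this is the part of the spectrum outside $\Gamma$, hence $\subset\Sigma_+^{(\vec{z})}$; this proves (1) and (2).

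For the resolvent bound eq.~\eqref{eq:resolventbound2}, fix $w\notin\{E(\vec{z})\}\cup\Sigma_+^{(\vec{z})}$ and split into two cases. If $|w|\ge\gamma/(4T)$, then $w\notin\Sigma_+^{(\vec{z})}$ forces $\dist(w,\Sigma_+^{(\vec{0})}) > \kappa$ (the $\kappa$-neighborhood of $\Sigma_+^{(\vec{0})}$ sits inside $\Sigma_+^{(\vec{z})}$) while $\dist(w,\{0\})\ge\gamma/(4T)>\kappa$, so the perturbative bound of the first paragraph applies; since $\Sigma_+^{(\vec{0})}\subset\Sigma_+^{(\vec{z})}$, $|E(\vec{z})|\le\kappa$, and the apex of the sector moved only $\mathcal{O}(|\vec{z}|)$, a direct comparison of distances gives $\dist(w,\{0\}\cup\Sigma_+^{(\vec{0})}) \ge c\,\dist(w,\{E(\vec{z})\}\cup\Sigma_+^{(\vec{z})})$ and hence eq.~\eqref{eq:resolventbound2}. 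If $|w|<\gamma/(4T)$, use the $\widehat{L}_{\vec{z}}$-invariant splitting $\Hi = \ran P(\vec{z})\oplus\ker P(\vec{z})$ to write $(\widehat{L}_{\vec{z}}-wI)^{-1} = (E(\vec{z})-w)^{-1}P(\vec{z}) + R'(\vec{z},w)$, where $R'(\vec{z},w) := (\widehat{L}_{\vec{z}}|_{\ker P(\vec{z})}-wI)^{-1}(I-P(\vec{z}))$ is analytic in $w$ throughout the disk bounded by $\Gamma$; by the maximum principle $\norm{R'(\vec{z},w)}$ on that disk is controlled by its supremum over $\Gamma$, which is finite and uniform in $\vec{z}$ (first paragraph, together with $|E(\vec{z})-w|\ge\gamma/(4T)-\kappa$ on $\Gamma$). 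Since $\norm{P(\vec{z})}$ is uniformly bounded and inside $\Gamma$ one has $\dist(w,\{E(\vec{z})\}\cup\Sigma_+^{(\vec{z})}) = \dist(w,E(\vec{z})) \le \gamma/(2T)$, the first term contributes $\lesssim 1/\dist(w,E(\vec{z}))$ and the bounded term $R'$ is absorbed, giving eq.~\eqref{eq:resolventbound2} in this case too.

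The skeleton above is entirely standard; the only points needing genuine care are the bookkeeping ones: checking that an $\mathcal{O}(|\vec{z}|)$-enlargement of the sectorial set eq.~\eqref{eq:Sigma+} is again of that form with $p'=p$, and splicing the perturbative resolvent bound (valid away from both $\{0\}$ and $\Sigma_+^{(\vec{0})}$) with the functional-calculus bound near the isolated eigenvalue $E(\vec{z})$ into the single estimate with $\dist$ measured to $\{E(\vec{z})\}\cup\Sigma_+^{(\vec{z})}$. I expect the distance comparison in the first case of the last paragraph to be the fussiest step, though it involves nothing beyond triangle-inequality geometry.
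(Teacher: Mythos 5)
Your proof is correct and follows essentially the same route as the paper's: bound $\norm{\widehat{L}_{\vec{z}}-\widehat{L}_{\vec{0}}}=\mathcal{O}(|\vec{z}|)$, use the resolvent estimate of Lemma~\ref{lem:L0}(3) via a Neumann series to trap $\sigma(\widehat{L}_{\vec{z}})$ in an $\mathcal{O}(|\vec{z}|)$-neighborhood of $\{0\}\cup\Sigma_+^{(\vec{0})}$, extract a rank-one Riesz projection by a contour integral to identify the simple eigenvalue $E(\vec{z})$, and then split the resolvent estimate into a far-field case (perturbative bound plus a distance comparison) and a near-field case (peel off the eigenprojection and control the complementary resolvent on the contour). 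The only differences from the paper's argument are cosmetic choices \textemdash\ contour at $|w|=\gamma/(4T)$ instead of $\gamma/(2T)$, case split at $\gamma/(4T)$ instead of $3\gamma/(8T)$, invoking the operator-valued maximum principle rather than writing out the Cauchy integral for $(\widehat{L}_{\vec{z}}-wI)^{-1}(I-Q_{\vec{z}})$, and the ``projections at norm distance $<1$ have equal rank'' lemma in place of analytic continuation of $\dim\ran Q_{\vec{z}}$.
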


\begin{proof}
Note that $\norm{\widehat{L}_{\vec{z}} - \widehat{L}_{\vec{0}}}\le \beta |\vec{z}|$ for some $\beta >0$. It follows from part 3 of Lemma \ref{lem:L0} that if $\dist\left (w, \{0\} \cup \Sigma_+^{(\vec{0})} \right )>  C \beta |\vec{z}|$, then $w$ is in the resolvent set of $\widehat{L}_{\vec{z}}$ and  
\begin{equation} \norm{\left ( \widehat{L}_{\vec{z}} - w I \right )^{-1}} \ \le \ \frac{C}{\dist\left (w, \{0\} \cup \Sigma_+^{(\vec{0})} \right ) - C \beta |\vec{z}| }.\label{eq:resolventbound3}
\end{equation}

Take $\epsilon = \frac{\gamma}{4C\beta T}$. Then for $|\vec{z}| <\epsilon$, we have
$$\sigma(\widehat{L}_{\vec{z}}) \subset \setb{w}{|w|<\frac{\gamma}{4T}} \cup \setb{w}{\dist(w,\Sigma_{+}^{(\vec{0})}) < \frac{\gamma}{4T}}$$
and 
$$\norm{\left ( \widehat{L}_{\vec{z}} - w I \right )^{-1}} \ \le \ \frac{4C}{\gamma T }.$$
on the circle $|w|=\nicefrac{\gamma}{2T}$. Thus the integral
$$Q_{\vec{z}} \ = \ \frac{1}{2\pi \im} \oint_{\abs{w}= \frac{\gamma}{2T}} \left ( w I - \widehat{L}_{\vec{z}} \right )^{-1} \di w$$
defines a Riesz spectral projection for $\widehat{L}_{\vec{z}}$.  In particular, $Q_{\vec{0}}$ is the  orthogonal projection on the span of $\delta_0 \otimes 1$. By analytic continuation, we see that $Q_{\vec{z}}$ is rank one throughout $\{|\vec{z}| <\epsilon\}$. Thus $\widehat{L}_{\vec{z}}$ has a single non-degenerate eigenvalue inside $\{|w|<\nicefrac{\gamma}{4T}\}$. By standard results of pertrubation theory (see \cite[Chapter 7]{Kato1995}) the corresponding eigenvalue $E(\vec{z})$ is analytic in $\vec{z}$.

Now let  $\Sigma_+^{(\vec{z})}$ be of the form eq.\ \eqref{eq:Sigma+2} with
$$\frac{\gamma'}{T} = \frac{\gamma}{T} - C \beta  |\vec{z}|,\quad c' =  c+ \frac{C\beta|\vec{z}|}{p},\quad p'=p. $$
Then for $w\not \in \Sigma_+^{(\vec{z})}$, we have $\dist(w,\Sigma_+^{(\vec{0})})> C\beta |\vec{z}|$ and so $w$ is in the resolvent set unless $w=E(\vec{z})$.  Thus we have proved eq.\ \eqref{eq:specdecomp}.

Finally, we must estimate the norm of the resolvent.  Suppose $|\vec{z}|<\epsilon$, $w\neq E(\vec{z})$ and $w \not \in \Sigma_+^{(\vec{z})}$. We consider two cases
\begin{enumerate}[leftmargin=.275in]
\item   If $|w| \ge \nicefrac{3\gamma}{8T}$ then  
$$\dist(w,E(\vec{z})) \ \le \ |w| + \frac{\gamma}{4T} \  \le \ \frac{5}{3} |w| \ \le \ 5 \left ( |w| - \frac{\gamma}{4 T} \right ) \ \le \  5 \left  (|w| - C \beta |\vec{z}|\right ).$$
Thus
$$ \dist(w, \{E(\vec{z})\}\cup \Sigma_+^{(\vec{z})}) \ \le \  5 \left (\dist\left (w, \{0\} \cup \Sigma_+^{(\vec{0})} \right ) - C \beta |\vec{z}| \right ),$$
and the desired estimate, eq.\ \eqref{eq:resolventbound2}, follows from eq.\ \eqref{eq:resolventbound3}. 
\item If $|w| < \nicefrac{3\gamma}{8T}$, we use the identity
$$(\widehat{L}_{\vec{z}}-w I)^{-1} \ = \ \frac{1}{E(\vec{z})-w} Q_{\vec{z}} + \left ( \widehat{L}_{\vec{z}}-wI \right )^{-1} (I-Q_{\vec{z}}).$$
As the second term is analytic throughout $|w| \le \nicefrac{\gamma}{2T}$, we have
$$  (\widehat{L}_{\vec{z}} - w I)^{-1} (I- Q_{\vec{z}})
\ = \ \frac{1}{2\pi \im} \oint_{|w'|=\frac{\gamma}{2T}} \frac{1}{w'-w} \left ( \widehat{L}_{\vec{z}} - w' I\right )^{-1} (I- Q_{\vec{z}} )  \di w',$$
and thus
$$\norm{(\widehat{L}_{\vec{z}}- wI)^{-1}} \ \le \   \frac{1}{\abs{E(\vec{z}) - w}}\norm{Q_{\vec{z}}} + \frac{8T}{3\gamma}\frac{4C}{\gamma T} \left [ 1 + \norm{Q_{\vec{z}}}\right ] .$$
Since $\norm{Q_{\vec{z}}} \ \le \ \nicefrac{4 C}{\gamma T}$
for $|\vec{z}| < \epsilon$ and 
$$\dist(w, \{E(\vec{z})\}\cup \Sigma_+^{(\vec{z})}) \ = \ \dist(w,E(\vec{z})) \ \le \ \frac{3\gamma}{4T}$$
the desired estimate follows. \qedhere
\end{enumerate}
\end{proof}

\section{Convergence of the characteristic function \textemdash \ Proof of Theorem \ref{thm:CF}}\label{sec:proof}
We now turn to the proof of Theorem \ref{thm:CF}.  Let $t>0$ and $K\subset \C^d$ be compact. Let 
$$ F_\tau(\vec{z})  \ = \ M_{\tau t} \left (\frac{\vec{z}}{\sqrt{\tau}};\psi_0 \right ).$$
 We wish to show
$$\lim_{\tau\rightarrow \infty} \sup_{\vec{z}\in K} \abs{F_{\tau}(\vec{z}) - \norm{\psi_0}_{\ell^2}^2 \e^{-t \sum_{i,j} D_{i,j} z_i z_j}} \ = \ 0.$$
Since $F_\tau(\cdot)$ is defined on $\sqrt{\tau} R_\mu$ we eventually have $F_{\tau}(\cdot)$ defined on $K$ and it makes sense to talk about the convergence. 

By the Feynman-Kac-Pillet formula,
$$F_\tau(\vec{z}) \ = \ \ipc{\delta_0\otimes 1}{\e^{-\tau t \widehat{L}_{\nicefrac{\vec{z}}{\sqrt{\tau}}}} \widehat{\rho}_{\nicefrac{\vec{z}}{\tau}} }.$$
For large enough $\tau$ we have $|\vec{z}|<\epsilon \sqrt{\tau}$ with $\epsilon$ as in Lemma \ref{lem:Lz}.  Thus 
\begin{equation} F_\tau(\vec{z}) \ = \ \e^{-\tau t E(\nicefrac{\vec{z}}{\sqrt{\tau}})} \ipc{\delta_0\otimes 1}{Q_{\nicefrac{\vec{z}}{\sqrt{\tau}}}  \widehat{\rho}_{\nicefrac{\vec{z}}{\tau}} } + \norm{\e^{-\tau t \widehat{L}_{\nicefrac{\vec{z}}{\tau}}} (I-Q_{\nicefrac{\vec{z}}{\sqrt{\tau}}})} \norm{\psi_0}^2.\label{eq:almostthere}
\end{equation}
Because $\widehat{L}_{\nicefrac{\vec{z}}{\sqrt{\tau}}}$ is sectorial, it is holomorphic and the spectral bound eq.\ \eqref{eq:Sigma+2} implies that 
$$\norm{\e^{-\tau t \widehat{L}_{\nicefrac{\vec{z}}{\tau}}}} \ \le \ C \e^{-\tau t (\gamma - \mathcal{O}(\nicefrac{|\vec{z}|}{\sqrt{\tau}} )}.$$
(See \cite[\S IX.1.6]{Kato1995}.) Thus the second term in eq.\ \eqref{eq:almostthere} is negligible in the large $\tau$ limit.  To compute the limit of the first term, we note that
\begin{enumerate}
\item $\widehat{\rho}_{\vec{z}}$ and $Q_{\vec{z}}$ are continuous (in norm) at $\vec{z} =0$.
\item The derivatives of $E(\vec{z})$ vanish at $\vec{z}=0$ since
$$\left . \frac{\partial E(\vec{z})}{\partial z_i} \right |_{\vec{z}=0} \ = \ \im \ipc{\delta_0\otimes 1}{\left . \frac{\partial \widehat{K}_{\vec{z}}}{\partial z_i} \right |_{\vec{z}=0}\delta_0\otimes 1} 
\ = \ \sum_{\zeta\in \Z^d} h(\zeta) \zeta_i \ipc{\delta_0\otimes 1}{\delta_\zeta \otimes 1} \ = \ 0.$$
\item For the second derivatives we have
$$ \left . \frac{\partial^2 E(\vec{z})}{\partial z_i\partial z_j} \right |_{\vec{z}=0} \ = \ \left .\ipc{\delta_0\otimes 1}{  \frac{\partial^2 \widehat{K}_{\vec{z}}}{\partial z_i\partial z_j} \delta_0 \otimes 1} \right |_{\vec{z}=0} \ + \ 2 \Re  \left .\ipc{\frac{\partial \widehat{K}_{\vec{z}}}{\partial z_i} \delta_0\otimes 1}{\left [ \widehat{L}_{\vec{0}} \right ]^{-1} \frac{\partial \widehat{K}_{\vec{z}}}{\partial z_i} \delta_0\otimes 1} \right |_{\vec{z}=0},$$
where $[\widehat{L}_{\vec{0}}]^{-1}$ denotes the inverse of $\widehat{L}_{\vec{0}}$ restricted to $\{\delta_0\otimes 1\}^\perp$ (which is bounded by Lemma \ref{lem:L0}). The first term on the right hand side is zero by a similar calculuation as for the first derivatives, however the second term is non-zero and gives
$$ \left . \frac{\partial^2 E(\vec{z})}{\partial z_i\partial z_j} \right |_{\vec{z}=0} \ = \ 2 \Re\sum_{\zeta,\zeta'\in \Z^d}  \zeta_i \zeta_j' \overline{h(\zeta)}h(\zeta') \ipc{ \delta_\zeta \otimes 1}{\left [ \widehat{L}_{\vec{0}} \right ]^{-1}  \delta_{\zeta'} \otimes 1} \ =: \ \vec{D}_{i,j}.$$
\end{enumerate}
We conclude that
$$ \tau t E(\nicefrac{\vec{z}}{\sqrt{\tau}}) \ = \ t \vec{z}\cdot \vec{D} \vec{z}  \ + \ \mathcal{O}(\nicefrac{1}{\sqrt{\tau}}).$$
Theorem \ref{thm:CF} now follows from eq.\ \eqref{eq:almostthere} since $ Q_{\vec{z}}\widehat{\rho}_{\vec{z}} \ \xrightarrow[]{|\vec{z}| \rightarrow 0} \ \norm{\psi_0}^2 \delta_0 \otimes 1.$

\appendix

\section{Two abstract operator lemmas and the proof of lemma \ref{lem:L0}.}
\label{sec:abstract}
\numberwithin{lem}{section}
\setcounter{lem}{0}
To prove Lemma \ref{lem:L0} we will make use of two abstract operator lemmas:
\begin{lem}\label{lem:abstract1} Let $A$ be a boundedly invertible operator on a Hilbert space $\Hi$ and suppose that $z\in R(A)$, the resolvent set of $A$.  Then
$$ \norm{(A-zI)^{-1} f} \ \ge \ \frac{1}{1 + |z| \norm{A^{-1}}} \norm{A^{-1} f}$$
for all $f\in \Hi$.
\end{lem}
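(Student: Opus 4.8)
The plan is to reduce everything to a one-line algebraic identity together with the triangle inequality. Fix $f \in \Hi$ and set $g := (A - zI)^{-1} f$. Since $z$ lies in the resolvent set of $A$ and $zI$ is bounded, the domain of $A - zI$ coincides with $\mathcal{D}(A)$, so $g \in \mathcal{D}(A)$ and $(A - zI) g = f$, i.e.\ $A g = f + z g$.

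Next I would apply the bounded operator $A^{-1}$ to both sides. Because $g \in \mathcal{D}(A)$ one has $A^{-1} A g = g$, hence
$$ g \ = \ A^{-1} f \ + \ z\, A^{-1} g .$$
Rearranging gives $A^{-1} f = g - z A^{-1} g$, and the triangle inequality together with $\norm{A^{-1} g} \le \norm{A^{-1}}\,\norm{g}$ yields
$$ \norm{A^{-1} f} \ \le \ \norm{g} + |z|\,\norm{A^{-1}}\,\norm{g} \ = \ \left( 1 + |z|\,\norm{A^{-1}} \right) \norm{g} .$$
Dividing by $1 + |z|\,\norm{A^{-1}}$ and recalling that $g = (A - zI)^{-1} f$ is exactly the claimed inequality.

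The only point requiring any care — and it is the sole ``obstacle,'' such as it is — is the bookkeeping of domains when $A$ is unbounded: one must check that $g$ genuinely lies in $\mathcal{D}(A)$ before writing $A^{-1} A g = g$, and that $A$ being boundedly invertible means $A^{-1}$ is everywhere defined and bounded, so that $A^{-1} g$ makes sense and the estimate $\norm{A^{-1} g} \le \norm{A^{-1}}\,\norm{g}$ is legitimate. Both facts are immediate from the hypotheses. No spectral theory, self-adjointness, or compactness enters; the argument is a purely formal manipulation of the resolvent identity $A g = f + z g$.
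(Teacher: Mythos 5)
Your proof is correct and follows essentially the same route as the paper: set $g=(A-zI)^{-1}f$, apply $A^{-1}$ to $(A-zI)g=f$ to get $g - zA^{-1}g = A^{-1}f$, and finish with the triangle inequality. The extra care you take with domains is sound but not substantively different from what the paper does implicitly.
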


\begin{proof}
Let $f \in \Hi$.  Since $z \in R(A)$, we have $(A-zI)^{-1}f\in \mathcal{D}(A)$ and 
$$ (A -z I) (A-zI)^{-1} f \ = \ f.$$
Operating on both sides with $A^{-1}$, we find that
$$ (I - z A^{-1} ) (A-zI)^{-1} f \  = \ A^{-1} f.$$
Thus
$$ \norm{A^{-1} f} \ \le \ \left (1 + |z| \|A^{-1}\| \right ) \norm{(A-zI)^{-1} f}$$
as claimed.
\end{proof}
\begin{lem}\label{lem:abstract2} Let $\Hi_0$ and $\Hi_1$ be Hilbert spaces, $W:\Hi_0 \rightarrow \Hi_1$ a non-zero bounded linear map, and $A$ a maximally accretive operator on $\Hi_1$ with 
$$\Re \ipc{f}{Af} \ \ge \ \Delta \norm{f}_1^2, \quad \text{ for all }f\in \Hi_1,$$
where $\Delta$ is a positive real number. Consider the maximally accretive operator $L$ on $\Hi_0 \oplus \Hi_1$ given in block matrix form by
$$ L = \begin{pmatrix} 0  & - W^\dagger \\  W & A \end{pmatrix}$$
with domain $\Hi_0 \oplus \mathcal{D}(A)$. If for some $\alpha  >0$ we have \begin{equation}\label{eq:lowerbound}\norm{A^{-1} W f} \ge \alpha \frac{\norm{W}}{\Delta} \norm{f} \quad \text{ for all }\quad  f \in \Hi_0,
 \end{equation}
then the spectrum of $L$ is contained in the half plane
\begin{equation}\label{eq:halfplane}
\sigma(L) \subset \setb{z}{ \Re z \ge \gamma \Delta } ,
\end{equation}
where 
$$\gamma \ > \  \frac{1}{1 + \phi^2}, \qquad \phi \ = \ \frac{1 }{\alpha } \left ( 2 \frac{\Delta}{\norm{W}}  +  \frac{9}{8} \frac{\norm{W}}{\Delta}  \right ) .$$
Furthermore, there is a finite constant $C$ such that
\begin{equation}\label{eq:normboundresolvent}
\norm{(L-zI)^{-1}} \ \le \ \frac{C}{\gamma-\lambda}
\end{equation}
whenever $\Re z = \lambda <\gamma \Delta$.
\end{lem}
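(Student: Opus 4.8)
The plan is to diagonalize the block structure and reduce the spectral problem for $L$ to a scalar-type estimate using the two abstract ingredients: the resolvent lower bound of Lemma A.1 and the hypothesis eq.~\eqref{eq:lowerbound} on $\norm{A^{-1}Wf}$. First I would fix $z$ with $\Re z = \lambda < \gamma\Delta$ and try to show $L - zI$ is boundedly invertible by exhibiting an a priori lower bound $\norm{(L-zI)\phi} \ge \kappa \norm{\phi}$ for $\phi = (f_0, f_1) \in \Hi_0 \oplus \mathcal{D}(A)$ with $\kappa$ depending only on $\gamma - \lambda$; together with maximal accretivity of $L$ (which forces $L-zI$ to have dense range once $\Re z < 0$, and more generally an analytic-continuation / connectedness argument along the resolvent set from the far-right half plane), this yields both eq.~\eqref{eq:halfplane} and the resolvent norm bound eq.~\eqref{eq:normboundresolvent}. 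Writing out $(L - zI)\phi = (-W^\dagger f_1 - z f_0,\ W f_0 + (A - zI) f_1)$, the strategy is to bound $\norm{f_0}$ and $\norm{f_1}$ separately in terms of the two components $g_0 := -W^\dagger f_1 - z f_0$ and $g_1 := Wf_0 + (A-zI)f_1$.

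The key step is the bound on $f_0$. From the second component, $(A - zI) f_1 = g_1 - W f_0$, so $f_1 = (A-zI)^{-1}(g_1 - Wf_0)$ provided $z \in R(A)$ — and since $\Re\ipc{f}{Af} \ge \Delta\norm{f}^2$ with $\lambda < \gamma\Delta < \Delta$, the half-plane $\Re z \le \lambda$ lies in $R(A)$ with $\norm{(A-zI)^{-1}} \le (\Delta - \lambda)^{-1}$. Substituting into the first component gives $g_0 = -W^\dagger (A-zI)^{-1} g_1 + W^\dagger (A-zI)^{-1} W f_0 - z f_0$, i.e.
\begin{equation}
\bigl( zI - W^\dagger (A-zI)^{-1} W \bigr) f_0 \ = \ -g_0 - W^\dagger (A-zI)^{-1} g_1.
\end{equation}
So everything reduces to a lower bound on the ``effective'' operator $M(z) := zI - W^\dagger(A-zI)^{-1}W$ acting on $\Hi_0$. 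The trick to get the $\gamma/(1+\phi^2)$ shape is to compare $M(z)$ against the $z = 0$ value: $M(0) = -W^\dagger A^{-1} W$, which by eq.~\eqref{eq:lowerbound} and Cauchy–Schwarz satisfies $\norm{W^\dagger A^{-1}W f_0} \ge$ something — actually better to use the quadratic form $\Re\ipc{f_0}{-W^\dagger A^{-1} W f_0}$ and relate it to $\norm{A^{-1}Wf_0}^2$ via accretivity, giving a bound of order $\alpha^2 \norm{W}^2/\Delta$ on $\norm{f_0}$. Then I would estimate $\norm{M(z) - M(0)} = \norm{zI + W^\dagger[(A-zI)^{-1} - A^{-1}]W} \le |z| + |z|\,\norm{W}^2 \norm{A^{-1}} \norm{(A-zI)^{-1}} \le |z|(1 + \norm{W}^2/(\Delta(\Delta-\lambda)))$, and balance: as long as $|z|$ is small compared to $\Delta$ and the perturbation of $M(0)$ is controlled, $M(z)$ stays invertible with a quantitative lower bound, and the threshold for $\Re z$ works out to $\gamma\Delta$ with $\gamma$ as stated. (The precise constants $2\Delta/\norm{W}$ and $\tfrac{9}{8}\norm{W}/\Delta$ in $\phi$ should drop out of carefully tracking the two competing error terms — the $9/8$ presumably from a $3\gamma/8T$-type split analogous to the one used in Lemma~\ref{lem:Lz}.)

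Once $\norm{f_0} \le C_0 (\norm{g_0} + \norm{g_1})/(\gamma - \lambda)$ is established, $\norm{f_1} \le (\Delta-\lambda)^{-1}(\norm{g_1} + \norm{W}\norm{f_0})$ follows immediately, giving $\norm{\phi} \le C\norm{(L-zI)\phi}/(\gamma-\lambda)$; combined with the same estimate for $L^\dagger$ (which has the same block structure with $W \leftrightarrow -W$, $A \leftrightarrow A^\dagger$, and $A^\dagger$ satisfies the same accretivity bound) to get surjectivity, this proves $z \in R(L)$ and eq.~\eqref{eq:normboundresolvent}. I expect the main obstacle to be purely computational: choosing the weights in the estimate of $M(z)$ so that the half-plane threshold comes out exactly at $\gamma\Delta$ with the claimed $\phi$, rather than at some cruder constant times $\Delta$. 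The conceptual content — Schur complement reduction plus Lemma A.1 plus the non-degeneracy hypothesis — is straightforward; getting the sharp shape of the spectral region is where the care goes.
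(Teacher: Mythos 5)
Your Schur-complement reduction is the same as the paper's: you correctly identify that for $\Re z < \Delta$ one can solve for $f_1$ via $(A-zI)^{-1}$ and reduce the problem to invertibility (with a quantitative lower bound) of the effective operator $M(z) = zI - W^\dagger(A-zI)^{-1}W$ on $\Hi_0$, and you correctly use Lemma~\ref{lem:abstract1} plus the hypothesis eq.~\eqref{eq:lowerbound} at $z=0$. Also note that once $M(z)$ is boundedly invertible, the explicit formulas for $f_0$ and $f_1$ in terms of $g_0,g_1$ already produce a bounded two-sided inverse for $L-zI$, so no separate adjoint/surjectivity argument is needed.

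However, the step where you ``compare $M(z)$ against the $z=0$ value'' by estimating $\norm{M(z)-M(0)}$ is a genuine gap, not merely a matter of tracking constants. The half-plane $\Re z = \lambda < \gamma\Delta$ is unbounded in the imaginary direction: writing $z = \lambda + \im\eta$, your bound $\norm{M(z)-M(0)} \le |z|\bigl(1 + \norm{W}^2/(\Delta(\Delta-\lambda))\bigr)$ grows without bound as $|\eta|\to\infty$ while the $z=0$ lower bound is fixed, so the perturbation argument simply does not close for large $|\eta|$. The paper avoids this by estimating the \emph{real and imaginary parts of the numerical range of $S(z)$ separately} (eqs.~\eqref{eq:imaginary} and \eqref{eq:real}): the imaginary-part lower bound $|\eta| - \norm{W}^2/(\Delta-\lambda)$ gives invertibility when $|\eta|$ is large, while the real-part lower bound (which uses Lemma~\ref{lem:abstract1} and the $A^{-1}W$ nondegeneracy, and degrades as $|\eta|$ grows through the factor $(2+|\eta|)^{-2}$) gives it when $|\eta|$ is small. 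The threshold $\gamma$ is then determined by the crossover value $\eta_+$ where both estimates saturate simultaneously, leading to the cubic equation \eqref{eq:eta+} and the explicit bound $\gamma > 1/(1+\phi^2)$. This ``duality'' between the two regimes is the missing idea in your plan, and it is also what produces the specific constants $2\Delta/\norm{W}$ and $\tfrac98\norm{W}/\Delta$ in $\phi$ (the $\tfrac98$ coming from an explicit AM--GM estimate on $\eta_+$, not from a $3\gamma/8T$ split as you guessed). The same two-regime split is what gives the quantitative resolvent bound \eqref{eq:normboundresolvent}: on $\Re z = \lambda$, the numerical-range distance to $0$ is bounded below by a constant times $\gamma-\lambda$ in each regime.
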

\begin{rems*}
1) Note that $\gamma >0$.  In particular, $\inf \setb{\Re z}{z\in \sigma(L)} >0$, even though $L$ is not \emph{strictly} accretive (since $0$ is in its  numerical range). 2) The theorem is non-perturbative: it holds without a smallness condition on $W$ \textemdash \ although $W$ must satisfy a bound of the form eq.~\eqref{eq:lowerbound}.  The conclusion eq.~\eqref{eq:halfplane} fails if $0$ is an eigenvalue of $W$, since then $0$ is an eigenvalue of $L$.
\end{rems*}

\begin{proof} First note that, since $\norm{A^{-1} W} \le \nicefrac{\norm{W}}{\Delta}$, we must have $\alpha \le 1$. Second note that it suffices to prove the lemma with $\Delta =1$. Indeed we may reduce to this case by multiplying $L$ by $\nicefrac{1}{\Delta}$.  Under this transformation $\alpha$ is unchanged but $\norm{W} \mapsto \nicefrac{\norm{W}}{\Delta},$ which is why the expression for $\gamma$ only depends on $W$ via the ratio $\nicefrac{\norm{W}}{\Delta}$.  

Henceforth we will assume that $\Delta =1$. In particular for $\Re z  < 1$ we have
\begin{equation}\label{eq:dissbound}
\norm{(A-zI)^{-1}} \ \le \ \frac{1}{1-\Re z}.
\end{equation}
Now consider the resolvent equation for $L$,
\begin{equation}\label{eq:resolvent}  \begin{pmatrix} g_0 \\ g_1 \end{pmatrix} \ = \ (L -zI) \begin{pmatrix} f_0 \\ f_1 \end{pmatrix} \ = \ \begin{pmatrix} -W^\dagger f_1 -z f_0 \\ W f_0 + (A-zI) f_1 \end{pmatrix}
\end{equation}
given $g_i\in \Hi_i$, $i=1,2$.  For $\Re z   < 1$,  we may solve for $f_1$ using eq.~\eqref{eq:dissbound}, 
\begin{equation}\label{eq:complementary} f_1 \ = \ (A-zI)^{-1} g_1 - (A-zI)^{-1} W f_0,
\end{equation}
so that eq.\ \eqref{eq:resolvent} reduces to the following equation for $f_0$,
\begin{equation}\label{eq:Schur} \left [ W^\dagger (A-zI)^{-1} W  - z I \right ] f_0 \ = \ g_0 + W^\dagger (A-zI)^{-1} g_1.
\end{equation}
In particular, for $z$ with $\Re z < 1$,  $z\in R(L)$ if and only if the bounded operator $S(z)$ in brackets on the left hand side of eq.~\eqref{eq:Schur} is invertible. 

To find a condition for invertibility of $S(z)$ we will investigate its numerical range. Fix $z=\lambda+\im \eta$ with $\lambda,\eta\in \R$, $\lambda <1$.  Given $f_0\in \Hi_0$, we have
\begin{equation}\label{eq:imaginary} \abs{\Im \ipc{f_0}{S(z) f_0}} \ \ge \ \abs{\eta} \norm{f_0}^2 - \norm{(A-zI)^{-1} } \norm{Wf_0}^2 \ \ge \ \left [ |\eta| - \frac{w^2}{1 -\lambda} \right ] \norm{f_0}^2,
\end{equation}
where $w=\norm{W}$.  Similarly,
\begin{multline}\label{eq:real} \Re \ipc{f_0}{S(z) f_0} \ \ge \  \Re \ipc{(A-zI)^{-1} W f_0}{ (A-z I) (A-zI)^{-1} W f_0} - \lambda \norm{f_0}^2 \\ \ge\  \left ( 1-\lambda \right ) \norm{(A-zI)^{-1} W f_0}^2 - \lambda \norm{f_0}^2
\ \ge \ \left [ \frac{\alpha^2 w^2 (1-\lambda)}{\left ( 1 + |z| \right )^2} - \lambda \right ] \norm{f_0}^2
\\ \ge \ \left [ \frac{\alpha^2 w^2}{\left ( 2 + |\eta| \right )^2} (1-\lambda) - \lambda  \right ] \norm{f_0}^2
\end{multline}
by Lem.\ \ref{lem:abstract1} and the assumed lower bound on $A^{-1} W$. 

We see from eqs.~(\ref{eq:imaginary}, \ref{eq:real}), that $S(z)$ is invertible provided either
\begin{equation}\label{eq:duality}\lambda < 1 - \frac{w^2}{|\eta|}  \quad \text{ or } \quad  \lambda < \frac{\alpha^2 w^2 }{\left ( 2 + |\eta| \right )^2 +\alpha^2 w^2}.
\end{equation}
The function on the right hand side of the left hand inequality is strictly increasing in $|\eta|$, while that on the right hand side of the right hand inequality is strictly decreasing.  Thus for any $z$ on the line $\Re z = \lambda$ at least one of the inequalities holds (and hence $S(z)$ is invertible) provided
\begin{equation}\label{eq:gamma}\lambda \ < \ \gamma \ := \  \frac{\alpha^2 w^2 }{ \left ( 2 + \eta_+ \right )^2     +\alpha^2 w^2} 
\end{equation}
where $\eta_+$ the unique positive number such that
\begin{equation}\label{eq:eta+}
 1 -\frac{w^2}{\eta_+} \ = \ \frac{\alpha^2 w^2 }{\left ( 2 + \eta_+ \right )^2 +\alpha^2 w^2} .
\end{equation}

We could find $\eta_+$ explicitly by solving this cubic equation, although the expression would be complicated.  To obtain a simple bound,  note that
$$
\frac{\alpha^2 w^2 }{\left ( 2 + \frac{9}{8} w^2  \right )^2 +\alpha^2 w^2} \ = \ \frac{1}{\frac{1}{\alpha^2} \left ( \frac{2}{w} + \frac{9}{8} w \right )^2 + 1} \ \le \ \frac{1}{ \left ( \frac{2}{w} + \frac{9}{8} w \right )^2} \ \le \  \frac{1}{9} \ =  \ 1-\frac{w^2}{\frac{9}{8} w^2},
$$
where we have used the fact that $\alpha \le 1$ and the inequality $a w + bw^{-1} \ge 2\sqrt{ab}$ for $w,a,b>0$. 
We conclude that $\eta_+ \le \nicefrac{9w^2}{8}$ and thus by eq.\ \eqref{eq:eta+} that 
$$\gamma  \ > \ \frac{1}{\frac{1}{\alpha^2} \left ( \frac{2}{w} + \frac{9}{8} w \right )^2 + 1}$$
as claimed.

It remains to estimate the norm of the resolvent $(L-zI)^{-1}$ for $\Re z < \gamma$.  To begin we note that there is a finite constant $C$ such that throughout this half plane
$$\norm{(L-zI)^{-1}} \ \le \ \frac{1}{1-\lambda} + C \norm{S(z)^{-1}}.$$
Indeed looking back at eqs.~(\ref{eq:complementary}, \ref{eq:Schur}) we see that 
$$\norm{f_0} \ \le \ \norm{S(z)^{-1}} \left ( \norm{g_0} + \frac{w}{1-\lambda} \norm{g_1} \right ) \ \le \left ( 1 + \frac{w}{1-\gamma} \right ) \norm{S(z)^{-1}} \norm{\begin{pmatrix} g_0 \\ g_1 \end{pmatrix}}$$
and  similarly
$$\norm{f_1} \ \le \ \frac{1}{1-\lambda} \left ( \norm{g_1} + w \norm{f_0} \right ) \ \le \ \left [ \frac{1}{1-\lambda} + \left ( \frac{w}{1-\gamma} + \frac{w^2}{(1-\gamma)^2} \right ) \norm{S(z)^{-1}} \right ] \norm{\begin{pmatrix} g_0 \\ g_1 \end{pmatrix}} .$$

Hence, to obtain eq.\ \eqref{eq:normboundresolvent}, it suffices to show that
\begin{equation}\label{eq:Schurbound}
\norm{S(z)}^{-1} \ \le \ \frac{C}{\gamma - \lambda}
\end{equation}
for $\Re z = \lambda <\gamma$. For this purpose, it is useful to observe that for $\lambda =\gamma$ and $\eta=\eta_+$ both inequalities in eq.~\eqref{eq:duality} saturate and the right hand sides of eqs.~(\ref{eq:imaginary}, \ref{eq:real}) are both zero,
$$ \eta_+ - \frac{w^2}{1- \gamma} \ = \ 0 \ = \ \frac{\alpha^2 w^2}{(2 + \eta_+)^2} (1-\gamma) - \gamma.$$
Now suppose $\lambda =\Re z <\gamma$.  If $|\eta| \geq \eta_+$ then 
$$ \frac{ \abs{\Im \ipc{f_0}{S(z)f_0}}}{ \norm{f_0}^2} \ \ge \ \left [ \eta_+ - \frac{w^2}{1-\lambda} \right ] \ \ge \ w^2 \left [ \frac{1}{1-\gamma} - \frac{1}{1-\lambda} \right ] \ \ge \ \frac{w^2}{(1-\gamma)^2} (\gamma-\lambda).$$
On the other hand if $|\eta| \leq \eta_+$ then
$$\frac{\Re \ipc{f_0}{S(z) f_0}}{ \norm{f_0}^2}  \ \ge \ \left [ \frac{\alpha^2 w^2}{(2 + \eta_+)^2 } (1-\lambda) -\lambda \right ] \ \ge \ \left [ \frac{\alpha^2 w^2}{(2 + \eta_+)^2 } (1-\gamma) -\lambda \right ]\ = \ \gamma -\lambda . $$
In either case we see that eq.\ \eqref{eq:Schurbound}, and thus eq.\ \eqref{eq:normboundresolvent}, holds.\end{proof}

We now turn to the
\begin{proof}[Proof of Lemma \ref{lem:L0}]
Since  $\widehat{K}_{\vec{0}}^{00}=0$ by eq.\ \eqref{eq:Kz0}, the block decomposition eq.\ \eqref{eq:blockmatrix} reduces in this case to 
$$\widehat{L}_{\vec{0}} = \begin{pmatrix} 0  & \im \widehat{V}^{01} \\
\im \widehat{V}^{10} &  \im \left (\widehat{K}_{\vec{0}}^{11} + \widehat{V}^{11} \right ) + B
\end{pmatrix}.
$$
We cannot apply Lemma \ref{lem:abstract2} directly, since 
$$\widehat{V}^{10} \delta_0 \otimes 1 \ = \ \widehat{V} \delta_0 \otimes 1 = \delta_0 \otimes (v_0 -v_0) \ = \ 0.$$
In particular, we see that  $\delta_0 \otimes 1$ is in the kernel of $\widehat{L}_{\vec{0}}$ and $\widehat{L}_{\vec{0}}^\dagger$  and thus that the subspace $\widetilde{\Hi} = \{\delta_0\otimes 1\}^\perp$ is invariant under $\widehat{L}_{\vec{0}}$ as claimed.  

Let $L=\left . \widehat{L}_{\vec{0}} \right |_{\widetilde{\Hi}}$ denote the  restriction of $\widehat{L}_{\vec{0}}$ to $\widetilde{\Hi}$.  We will apply Lemma \ref{lem:abstract2} to $L$ using the  decomposition $\widetilde{\Hi}= \widetilde{\Hi}_0\oplus \Hi_1$ where $$\widetilde{\Hi}_0  \ = \ P_0 \widetilde{\Hi} \ = \ (P_0 - Q_0) \Hi$$ 
with $Q_0$ the orthogonal projection onto $\delta_0\otimes 1$.
The operator $L$ has the block matrix form
$$ L \ = \ \begin{pmatrix}  0 & -W^\dagger  \\
 W & \widehat{L}_{\vec{0}}^{11}
 \end{pmatrix}, \quad W =  \left . \im \widehat{V} \right |_{\widetilde{\Hi}_0} , $$
 with a strictly accretive lower right-hand block
\begin{equation}
\Re \ipc{f}{\widehat{L}_{\vec{0}}^{11}f}\ = \  \Re \ipc{f}{Bf} \ \ge \frac{1}{T}\norm{f}^2, \quad f\in \Hi_1
\end{equation}
by the gap condition eq.\ \eqref{eq:gap}. Thus to apply Lemma \ref{lem:abstract2} we need only verify eq.\ \eqref{eq:lowerbound}:
\begin{align*}\norm{ \left (\widehat{L}_{\vec{0}}^{11} \right )^{-1} W f}
\ &= \  \norm{ (\widehat{K}_0^{11} + \widehat{V}^{11} + B)^{-1} Wf} \\ &= \ \norm{ \left (B^{-1} \widehat{K}_0^{11} + B^{-1} \widehat{V}^{11} + I \right )^{-1} B^{-1} Wf}  \\ &\ge \ \frac{1}{\norm{B^{-1}  \widehat{K}_0^{11}} + \norm{B^{-1} \widehat{V}^{11}} + 1} \norm{B^{-1} W f} \\ &\ge \ \frac{\chi}{\norm{B^{-1}  \widehat{K}_0^{11}} + \norm{B^{-1} \widehat{V}^{11}} + 1} \norm{f},
\end{align*}
where in the second to last inequality we used Lemma \ref{lem:abstract1} and in the last eq.~\eqref{eq:vnondeg2}.

Hence, by Lemma \ref{lem:abstract2}, we find that $\sigma(L)$ is contained in a half plane $\{\Re z \ge \nicefrac{\gamma}{T}\}$ with $\gamma>0$. On the other hand we have
$$\abs{\Im \ipc{f}{Lf}} \ \le \ \left [ \norm{\widehat{K}_{\vec{0}}} + \norm{\widehat{V}} \right ] \norm{f}^2 + \abs{\Im \ipc{f}{Bf}} .$$
By the sectoriality of $B$ eq. \eqref{eq:sector}, we find that the numerical range of $L$ and  spectrum of $L$ are contained in a set of the form
$$ \setb{z}{ \abs{\Im z } \le p \left [ \Re z + c \right ]}$$
for suitable $p,c >0$.  The intersection of this set and the half plane $\{\Re z \ge \nicefrac{\gamma}{T}\}$ gives the set $\Sigma_+^{(\vec{0})}$ of the lemma. Finally, the norm bound on the resolvent follows from eq.\ \eqref{eq:normboundresolvent} of Lemma \ref{lem:abstract2} together with the standard estimate
$$ \norm{\left (A-z \right)^{-1}} \ \le \ \frac{1}{\dist(z,\text{numerical range of $A$})}$$
valid for maximally sectorial operators.
\end{proof}

\section{Proof of the finite group velocity lemma}\label{sec:FSOPproof}
The first step in the proof of Lem.\ \ref{lem:FSOP} is to remove the diagonal terms, by changing to the ``interaction picture'' for the potential $V(t,x)=H(t;x,x)$.  That is we replace $U(t,s;x,y)$ by the kernel
$$W(t,s;x,y) = \e^{i\int_{t_0}^t V(r,x) dr} U(t,s;x,y)\e^{-i\int_{t_0}^s V(r,y) dr}$$
for some fixed, but arbitrary, $t_0$. Then $W(t,s;x,y)$ is the kernel of a propagator that satisfies
$$i \partial_t W(t,s) = K(t) W(t,s) , \quad W(s,s)=I$$
where $K(t;x,y)= \e^{i\int_{t_0}^t [V(r,x) - V(r,y)] dr} H(t;x,y) \left [ 1 - I(x,y) \right ].$
Thus 
\begin{equation}\label{eq:onestep}\abs{W(t,s;x,y)} \le I(x,y) + \int_s^t \sum_{\zeta\neq x} \abs{K(r,x,\zeta)} \abs{W(r,s;\zeta,y)} dr
\end{equation}

Note that  the exponential bound eq.\ \eqref{eq:expbound} holds with $K$ in place of $H$ with the same constants $m$ and $v$, i.e., we have
\begin{equation}\label{eq:expboundapp}v \ := \ \sup_t \sup_y \sum_{x\neq y} \e^{m|x-y|} \abs{K(t;x,y)} \ < \ \infty.\end{equation}
Because $\abs{U(t,x;x,y)}=\abs{W(t,s;x,y)}$, the desired estimate eq.~ \eqref{eq:fsop} essentially follows from eq.~ \eqref{eq:onestep} and Grönwall's lemma.  However, the argument is complicated a bit because we do not know \emph{a priori} that $\sum_x \e^{m|x-y|} U(t,s;x,y)$ is finite.  

Taking $m=0$ in eq.~\eqref{eq:expboundapp}, we see that $K(t)$ is a bounded map from $\ell^1(\Z^d) \rightarrow \ell^1(\Z^d)$ with norm no larger than $v$. Thus, 
$$\sum_{x} |W(t,s;x,y)|\ \le \ \e^{v|t-s|}. $$
It follows that 
$$\sum_{x}  \e^{m|x-y|_R} |W(t,s;x,y)| \ \le \ \e^{mR + v|t-s|}$$
where $|x-y|_R = \min(|x-y|,R)$.  Then from eqs.~\eqref{eq:onestep} and \eqref{eq:expboundapp} we conclude
\begin{align*}\sum_{x}  \e^{m|x-y|_R} |W(t,s;x,y)| \ &\le \ 1 + \int_s^t \sum_{\zeta} \left [ \sum_{x\neq\zeta }  \e^{m|x-\zeta|} |K(r,x,\zeta)| \right ] \e^{m|\zeta-y|_R} \abs{W(r,s;\zeta,y)} dr \\ 
&\le \ 1 + v \int_s^t \sum_{x} \e^{m|x-y|_R} \abs{W(r,s;x,y)} dr.
\end{align*}
Since both left and right hand sides are finite, Grönwall's lemma implies that
$$ \sum_{x} \e^{m|x-y|_R} \abs{W(t,s;x,y)}  \ \le \ \e^{v|t-s|}.$$
The finite speed of propagation bound eq.~\eqref{eq:fsop} now follows by taking $R \rightarrow \infty$. \qed 

\bibliographystyle{abbrv}


\end{document}